\newtheorem{theorem}{Theorem}[section]
\newtheorem{remark}{Remark}[section]
\newtheorem{proposition}{Proposition}[section]
\newtheorem{asu}{Assumption}[section]
\newcommand{\epc}{\hspace{1pc}}
\newcommand{\bg}{\bm{g}}
\newcommand{\Eb}{\mathbf{E}}
\newcommand{\Vb}{\mathbf{V}}
\newcommand{\Xb}{\mathbf{X}}
\newcommand{\argmin}{\mathop{\mathrm{argmin}}}
\def\lbegar{$$\left\{ \begin{array}{lll}}
\def\rendar{\end{array} \right.$$}
\def\rendarp{\end{array} \right..$$}
\def\rendarv{\end{array} \right.,$$}
\def\begar{$$\begin{array}{lll}}
\def\endar{\end{array}$$}
\def\begarlab{\begin{equation} \begin{array}{lll} \label}
\def\endarlab{\end{array} \end{equation}}
\def\argmax{\text{argmax}}
\def\argmin{\text{argmin}}
\def\ds1{{\mathrm{1 \hspace{-2.6pt} I}}}
\def\calA{{\cal A}}
\def\calD{{\cal D}}
\def\calF{{\cal F}}
\def\calH{{\cal H}}
\def\calM{{\cal M}}
\def\calN{{\cal N}}
\def\calT{{\cal T}}
\def\calV{{\cal V}}
\def\calX{{\cal X}}
\def\bg1{{g_1}}
\def\Z1N{Z_1^N}
\def\X1N{X_1^{N}}
\def\l0{{\lambda_0'}}
\def\sign{\text{sign}}
\begin{document}
	\title{On  Robustness of Individualized Decision Rules}
\author{
	Zhengling Qi, Jong-Shi Pang and Yufeng Liu\thanks{Zhengling Qi  is Assistant Professor, Department of Decision Sciences, George Washington University. E-mail: qizhengling@email.gwu.edu. Jong-Shi Pang is Epstein Family Chair and Professor, Department of Industrial and Systems Engineering, University of Southern California, LA. E-mail: jongship@usc.edu. Yufeng Liu is Professor, Department of Statistics and Operations Research, Department of Genetics, Department of Biostatistics, Carolina Center for Genome Sciences, Lineberger Comprehensive Cancer Center, University of North Carolina at Chapel Hill, NC 27599, USA. E-mail: \href{mailto:yfliu@email.unc.edu}{yfliu@email.unc.edu}.
The authors would like to thank the Editor, the Associate Editor, and reviewers, whose helpful comments and suggestions led to a much improved presentation.
Yufeng Liu's research was supported in part by NSF grants DMS-1821231 and DMS-2100729, and NIH grant R01GM126550.
	}}
\date{}


\maketitle
\begin{abstract}
	With the emergence of precision medicine, estimating optimal individualized decision rules (IDRs) has attracted tremendous attention in many scientific areas. Most existing literature has focused on finding optimal IDRs that can maximize the expected outcome for each individual. Motivated by complex individualized decision making procedures and the popular conditional value at risk (CVaR) measure, we propose a new robust criterion to estimate optimal IDRs in order to control the average lower tail of the individuals' outcomes. In addition to improving the individualized expected outcome, our proposed criterion takes risks into consideration, and thus the resulting IDRs can prevent adverse events. The optimal IDR under our criterion can be interpreted as the decision rule that maximizes the ``worst-case" scenario of the individualized outcome when the underlying distribution is perturbed within a constrained set. An efficient non-convex optimization algorithm is proposed with convergence guarantees. We investigate theoretical properties for our estimated optimal IDRs under the proposed criterion such as consistency and finite sample error bounds. Simulation studies and a real data application are used to further demonstrate the robust performance of our methods. Several extensions of the proposed method are also discussed.
	
	\noindent
	{\bf Keywords and Phrases}: Conditional value at risk; Individualized decision rules; Non-convex optimization; Robustness; Tail controls
\end{abstract}

\section{Introduction}
Decision making is a long standing research problem in many scientific areas, ranging from engineering, management science, to statistics. In the era of big data, the traditional ``one fits all" decision rules are no longer ideal in many applications due to data heterogeneity. A decision rule that works for certain individuals may not necessarily work for others. Motivated by this, it is desirable to make individualized decision rules (IDRs) that map from individual characteristics into available decision assignments.  Developing effective IDRs has a wide range of applications. For example, a credit card company hopes to send a special offer for each targeted customer tailoring to their personal needs. An epidemiologist needs to decide whether to deliver a vaccine plan to a specific region in order to prevent the spread of diseases. In medical applications, IDRs can be developed for better prevention and treatment methods that are tailored to each individual patient. Developing optimal IDRs is one of the main goals of precision medicine, also known as personalized medicine.

 Most existing methods in the literature, from the data analytic perspective, are focused on estimating the optimal IDR that can maximize the expected outcome or minimize the expected loss for each individual (\cite{qian2011performance, manski2004statistical}). For example, we may want to learn an IDR $d$ that maps an individual's covariate $\Xb \in \calX$ into a binary decision space $\calA = \{1, -1\}$, i.e., $d: \calX \rightarrow \calA$, in order to maximize the expectation of $R(d)$ or a utility function $u$ of $R(d)$. Here $R(d)$ is the random outcome under IDR $d$, which will be formally defined in Section 2. Then the problem can be mathematically formulated as the following optimization problem:
\begin{equation}\label{basic}
	\max_{d \in \calD_0} \epc \Eb[R(d)],
\end{equation} 
where $\calD_0$ is a class of all IDRs. If some utility function $u$ is used, then one could replace the objective function in \eqref{basic} with $\Eb[u(R(d))]$. The equivalent form of Problem \eqref{basic} is 
\begin{equation}\label{epi}
	\begin{aligned}
	\max_{t \in \mathbb{R}, d \in \calD_0} & \epc t \\
	\text{subject to}& \epc t \leq \Eb[R(d)],
	\end{aligned}
\end{equation}
by using the hypographical representation (\cite{rockafellar2009variational}). 

According to \eqref{epi}, it may be reasonable to restrict the expected outcome larger than some threshold when stochastic ups and downs of outcome $R(d)$ can be safely averaged out. Then by implementing the IDR obtained by \eqref{epi}, we can guarantee on average the outcome will be at least as good as some threshold $t$. However, in practice, practitioners usually want to have a safe margin to protect against undesired outcomes, especially when the lower tails of the outcome are more important. For instance, in reliability engineering (\cite{rockafellar2010buffered}), people are often interested in controlling the failure probability of some designed systems or structures such as buildings and bridges, instead of expected failure time. Here the selection of designs can be interpreted as IDRs, which are based on environmental conditions, existing materials, etc. 
In precision medicine, sometimes the gain in the expected outcome may be very little by comparing two treatments while the tail of the potential outcome distribution is of direct interest (\cite{wang2017quantile}). Suppose two drugs are used to improve CD4 T cell amount of AIDS patients. Since the normal range of CD4 cells is $500$-$1500$, then in practice the event of a certain subject $\{R_i(d) = 600 \}$ may be treated as good as the event of  $\{R_i(d) =800 \}$. In contrast, the event $\{R_i(d) = 50 \}$ may be considered much worse than $\{R_i(d) = 400 \}$. Hence only using \eqref{epi} to search for the best IDR may not be sufficient when the tails are more important or the variability of the outcomes needs to be controlled. In this case, one may consider using the truncated mean with a fixed cutoff as a criterion for better decision making, i.e., $\Eb[\min(R(d), L)]$, where $L$ is a fixed cutoff. However, in general, the cutoff $L$ is unknown and can be difficult to determine. Even though $L$ can be pre-determined in some ideal case, when the scale of $R(d)$ is changed or there is some measurement error or batch effect in the observed $R_i(d)$, the final $L$ needs to be properly modified, which could still be hard to choose. In addition, for many applications especially medical studies, normal ranges are usually determined by quantiles of healthy people. Therefore it may be more reasonable to consider the expected outcome lower than a certain quantile when evaluating IDRs.
 In this paper, motivated by the conditional value at risk (CVaR) used extensively in finance and risk management (\cite{rockafellar2000optimization}), we propose a new criterion that considers average lower tails of outcome to evaluate IDRs. The resulting IDR under our proposed criterion can optimize the outcome of each individual and provide a safe margin against adverse events jointly. Our work is closely related to the recent paper by \cite{qicuiliupang19}, where they established a mathematical framework to study IDRs under general risk measures. In this work, we are more specific on the lower tail of outcomes for making decisions and develop a thorough statistical analysis. We propose an efficient optimization algorithm for our specific problem with better convergence guarantee than that of \cite{qicuiliupang19}. In addition, several practical extensions are also discussed.

\subsection{Related Literature}
There is an increasing body of literature in learning optimal IDR under the framework of \eqref{basic}. The literature spans across various fields such as statistics, economics, and machine learning. Along the line of statistics literature, most existing methods can be roughly divided into two categories: model based methods and direct search methods. Q-learning (\cite{watkins1989learning}, \cite{murphy2005generalization}, \cite{schulte2014q}) and A-learning (\cite{murphy2003optimal}, \cite{robins2004optimal}, \cite{shi2018high}) are two representative model based methods. Other variants include \cite{fan2016concordance}, \cite{gunter2011variable}, etc. For direct search methods, by viewing IDR problems as a weighted classification problem, \cite{zhao2012estimating} proposed to use the weighted support vector machine method to estimate the optimal IDR. Following that, various types of machine learning methods were proposed, such as \cite{liu2016robust}, \cite{zhou2017residual}, \cite{zhao2015doubly}, \cite{cui2017tree}, \cite{tao2016adaptive}, \cite{laber2015tree}, \cite{zhang2015using}, \cite{chen2018estimating}, \cite{zhangss20}. In addition,  \cite{tian2014simple}, \cite{qi2017} and \cite{qi2018} proposed to use regression methods to directly estimate the optimal IDR. Recently, \cite{wang2017quantile} proposed to use the quantile of $R(d)$ as a criterion to search for best IDRs, which is closely related to this paper. Their method can help to obtain robust optimal IDRs by controlling the lower quantile. However, it can be unstable when the potential outcome distribution is discrete. More importantly, as we mentioned in the example of CD4 T cells, lower tails of the outcome should be treated differently: if both $\{R_i(d) = 50\}$ and $\{R_i(d) = 400\}$ are below the $25\%$ quantile, the first event should be considered much worse than the latter one. We will discuss this further in Section 2. Other quantile-based methods include \cite{linn2017interactive}, \cite{xiao2019robust} and \cite{fang2021fairness}.

In the econometrics literature, \cite{manski2004statistical} provided a comprehensive regret analysis on the estimation of optimal IDRs with some connection to statistical decision theory (\cite{savage1951theory}). Later on, exact finite sample regret analysis was established by \cite{stoye2009minimax} and \cite{tetenov2012statistical}. Under smooth parametric and semi-parametric settings, \cite{hirano2009asymptotics} investigated asymptotic optimality and large sample properties of optimal IDRs. Other related work includes \cite{chamberlain2011bayesian,bhattacharya2012inferring,kasy2016partial}. Recently, \cite{kitagawa2018should} and  \cite{athey2017efficient} established rate-optimal regret bounds for learning optimal IDRs. All of these existing developments are based on expected outcome or expected utility. \cite{dehejia2008ate} studied the risk aversion of treatment effect evaluation by using the mean-variance trade off criterion.

We would like to point out the increasing literature of learning optimal IDRs in the machine learning community, which is often referred as batch learning from bandit feedback, such as \cite{Beygelzimer:2009:OTL:1557019.1557040,dudik2011doubly,JMLR:v16:swaminathan15a,dudik2011doubly,NIPS2018_8105}. Finally,  in the reinforcement learning literature, CVaR has been used as constraints in Markov decision processes such as \cite{Tamar:2015:OCV:2888116.2888133} and \cite{ Chow:2017:RRL:3122009.3242024}.

\subsection{Main Contributions and Outline}
The main contributions of this paper can be summarized as follows. We leverage the CVaR criterion and  propose a robust criterion to directly estimate the optimal IDR that can  improve the expected outcome while simultaneously controlling the average lower tails of the outcome. We discuss several important properties of our criterion and its practical usage by providing safety protection for implementing optimal IDRs. An efficient non-convex optimization algorithm is proposed to compute the solutions with a convergence guarantee. We establish several important theoretical properties of our estimator under the proposed criterion similar to the regret analysis in \cite{zhou2017residual, athey2017efficient}, where the difference is to handle additional estimation error due to the special structure of the CVaR criterion.

The remainder of this paper is organized as follows. In Section 2, supplementing the previous value function framework, we introduce a new criterion to estimate the optimal IDR by using the concept of CVaR in risk management.  We present several properties of our proposed criterion. In Section 3, we discuss our statistical estimation procedure to compute optimal IDRs under our proposed criterion. An efficient non-convex optimization algorithm is presented by using some recent developments in difference of convex algorithms (DCA).
In Section 4, we establish several important theoretical properties of our method based on statistical learning theory. We demonstrate our method via extensive simulation studies and a real data application in Sections 5. In Section 6, we discuss several extensions of our proposed criterion from the perspectives of algorithm and modeling. Some technical results are provided in the supplementary materials.
\vspace{-0.2in}
\section{Robust Criteria to Estimate Optimal IDRs}
\subsection{Notation and Basic Settings}
We discuss our IDR problem under the potential outcome framework (\cite{rubin1974estimating}).
We use $R$ to denote the outcome after receiving treatment $A$. We consider a binary-treatment setting and encode $A$ as either $1$ or $-1$. The treatment space is denoted by $\calA$, i.e., $\calA = \left\{-1, 1\right\}$. 
Let $\Xb \in \calX$ denote the $p$-dimensional random vector for covariates. Here $\calX \subset \mathbb{R}^p$ is the covariate space. Throughout this paper, we make the following three standard assumptions.

\begin{asu}\label{consistency}
 	 $R = R(A)$.
\end{asu}
\begin{asu}\label{confounder}
	 $R(a) \; \bot \; A \, | \,  \Xb$ for any $a \in \calA$, where $\bot$ represents independence.
\end{asu}
\begin{asu}\label{overlap}
	$ \pi(a | \Xb) \geq c$ almost surely for any $a \in \calA$ and some positive constant $c$. 
\end{asu}
For simplicity, we assume the random outcome $R$ has a bounded support. This assumption can be relaxed by the high-order moment condition. See \cite{athey2017efficient} for more details. Without loss of generality, we assume that the larger $R$ indicates the better condition an individual is in. Throughout this paper, we consider a randomized experiment and therefore the propensity score $\pi(A | \Xb)$ is known. For the observational studies, the proposed method can be applied as well by estimating the propensity score using various methods such as the logistic regression. An IDR $d$ is defined as a measurable function mapping from the covariate space $\calX$ into the treatment space $\calA$. 
%
 We also let $L^r(\calT, \calF_1, \mathbb{P})$ be the space of all measurable functions such that $\int_{T \in \calT} |f(T)|^r \text{d}P^d < \infty $, where $\calF_1$ is the  $\sigma$-field generated by $\calT := \calX \times \calA \times \mathbb{R}$ and $\mathbb{P}$ is the corresponding probability measure.
\subsection{Expected Value Function Framework}
Before introducing our new criterion and methods, we first present the value function framework used by the most existing methods, such as (\cite{manski2004statistical}) and (\cite{qian2011performance}). The value function of an IDR $d$ is defined as
\begin{equation}\label{value fun}
\begin{aligned}
V(d) & = \Eb[R(d)] = \Eb_\Xb\left[\Eb[R(d) | \Xb]\right] =\Eb_\Xb\left[\Eb[R(d) | \Xb, A = d(\Xb)]\right]\\ 
& = \Eb_\Xb\left[\Eb[R| \Xb, A = d(\Xb)]\right]
=\Eb\left[\frac{R\mathbb{I}(A = d(\Xb))}{\pi(A | \Xb)}\right],
\end{aligned}
\end{equation}
where the first line is based on Assumption \ref{confounder}, the first equality in the second  line is based on Assumption \ref{consistency} and the last equality relies on Assumption \ref{overlap}. Based on this value function, an optimal IDR $d_{v}$ under the mean criterion is defined as 
\vspace{-0.2in}
\begin{equation*}\label{optimal d}
	d_v \in \argmax_{d \in \calD_0} V(d),
\end{equation*}
which is equivalent to
\begin{equation*}\label{optimal d 2}
	d_v(\Xb) \in \argmax_{a \in \calA} \,  \Eb\left[R \, | \, \Xb, A = a\right],
\end{equation*}
almost surely. It is observed that under the value function framework, the optimal IDR is to select the treatment with the largest expected outcome among all treatments for each individual.

Despite much progress towards developing optimal IDRs in the intersection of statistics, econometrics, and machine learning, only focusing on obtaining the largest expected outcome for each individual can be too restrictive and sometimes may not be even safe. For example, doctors may want to know whether a treatment does the best to improve the worst scenario, in particular for a patient with high risk. Without such a risk consideration, it may lead to serious consequences, such as exacerbation or hospitalization in practice. Similar concerns may happen in the credit card company, where the ``best" policy should not only improve the expected profit, but also reduce the chance of incurring a heavy loss. This motivates us to control risk exposure associated with the corresponding decision rules, in addition to maximizing the expected outcome for each individual.
\vspace{-0.2in} 
\subsection{Conditional Value at Risk}
It is natural to consider some robust metrics such as quantiles of $R$ given $\Xb$ and $A$ to measure the effect of a treatment (\cite{wang2017quantile}). The corresponding optimal IDR $\widetilde{d}$ under the quantile may be defined as
\begin{equation}\label{quantile optim}
	\widetilde{d} \in \argmax_{d \in \calD_0} \, Q_\gamma(R(d)),
\end{equation}
where $Q_\gamma(R(d)) = \inf \{y: P(R(d) \leq y) \geq 1 - \gamma  \}$ and $\gamma \in (0, 1)$. Analogous to Problems \eqref{basic} and \eqref{epi}, an equivalent form of Problem \eqref{quantile optim} can be constructed as follows:
\begin{equation}\label{epi quantile}
	\begin{aligned}
	\max_{t \in \mathbb{R}, d \in \calD_0} & \epc t\\
	\text{subject to} & \epc t \leq Q_\gamma(R(d)).
	\end{aligned}
\end{equation} 
Based on the above representation, roughly speaking, the constraint set $\{t \leq Q_\gamma(R(d)) \}$ implies that $(1-\gamma) \times 100\%$ of the  population under the given IDR are controlled to be at least as good as a certain threshold $t$. Thus under the quantile criterion, one can obtain a robust IDR that can improve almost $(1-\gamma) \times 100\%$ of the population to some extent.

There are several potential drawbacks of using quantile in IDR problems. First of all, using the quantile criterion treats all the outcomes lower than $Q_\gamma(R(d))$ as the same. However, as we discussed in the aforementioned example, the CD4 T cell $\{R_i(d) = 400\}$ below the normal level is considered to be much better than $\{R_i(d) = 50\}$, therefore they should be treated differently in practice. Secondly, $Q_\gamma(R(d))$ is generally not continuous in $\gamma$, which may cause instability. For instance, if the outcome distribution is discrete and there is a small change in $\gamma$, the resulting optimal $\widetilde{d}$ may change significantly. Lastly, from the computational perspective, the quantile makes the optimization Problem \eqref{quantile optim} hard to solve and thus may limit its use in practice.

In order to address the drawbacks of using quantiles, we propose to use the conditional value at risk (CVaR), also known as expected tail loss, which was proposed by \cite{artzner1999coherent} in risk management. Consider a continuous random variable $Y$. Then the $\gamma$-CVaR of $Y$ may be defined as
\begin{equation}\label{def CVaR}
S(F_Y) := \frac{1}{\gamma}\Eb[Y\mathbb{I}(Y \leq Q_\gamma(F_Y))],
\end{equation}
where $F_Y$ is the corresponding probability distribution of $Y$. Based on this definition, the $\gamma$-CVaR can be interpreted as a truncated mean lower than $\gamma$-quantile of $Y$. For the general setting, instead of assuming a continuous distribution of $Y$, $\gamma$-CVaR is defined as an optimal value of a concave maximization problem by the celebrated work of \cite{ben1986expected,rockafellar2000optimization}, which is defined as follows:
\begin{equation}\label{CVaR compute}
S(F_Y) := \sup_{\alpha \in \mathbb{R}} \left\{ \alpha - \frac{1}{\gamma}\Eb[(\alpha - Y)_+]\right\},
\end{equation}
where $[t]_+ = \max(0, t)$. The leftmost of the optimal solution set to \eqref{CVaR compute} is $Q_\gamma(F_Y)$ (\cite[~Theorem 1]{rockafellar2000optimization}). Then one can see that the definition in \eqref{CVaR compute} is equivalent to \eqref{def CVaR} when the outcome distribution of $Y$ is continuous. We also remark that  $Y$ is often referred to a loss in the finance literature. Here we call $Y$ an outcome/reward in order to be consistent with our problem setting.

Lastly, we would like to point out that $\gamma$-CVaR has several nice properties discussed by \cite{artzner1999coherent} and it is in general preferable to the quantile measure (\cite{sarykalin2008value}). In particular,
based on the interpretation of \eqref{def CVaR}, $\gamma$-CVaR considers average outcomes lower than $\gamma$-quantile, which treats lower tails of outcome differently. This exactly satisfies our purpose. The concave maximization formulation in \eqref{CVaR compute} demonstrates the continuity of $\gamma$-CVaR with respect to $\gamma$, which provides more stability measure compared with the quantile one. Furthermore, $\gamma$-CVaR is considered to be more computationally efficient than the quantile criterion because of the concave maximization formulation. Finally, \cite{pflug2000some} and \cite{rockafellar2002conditional} showed that $S(F_Y) \leq Q_\gamma(F_Y)$, suggesting that $\gamma$-CVaR is more conservative than $\gamma$-quantile. This also implies that a larger $\gamma$-CVaR of a random outcome indicates a larger $\gamma$-quantile. Clearly the reverse inequality does not necessarily hold. All these nice properties motivate us to use CVaR in the IDR problems. 
 For the related theoretical discussion about CVaR, we refer to \cite{rockafellar2002conditional} and the references therein.
\subsection{A New Robust Criterion for IDR Problems}
We borrow the concept of CVaR in order to conduct risk control to obtain a robust optimal IDR that can improve the lower tails of outcomes. Specifically, we define the following decision-rule based $\gamma$-CVaR criterion as
 \begin{equation}\label{M0}
 	M_{0, \gamma}(d) := \sup_{ \alpha \in \mathbb{R}} \left\{\alpha - \frac{1}{\gamma}\Eb[(\alpha - R(d))_+] \right\}.
 \end{equation} 
Note that the difference between the criterion \eqref{M0} and the original CVaR is to let the outcome $R$ depend on IDR $d$. The proposed criterion is not restricted to the continuous outcomes but can also be used in the discrete outcome cases, thus providing a broad application. In addition,  if a small value of the outcome is more desirable (i.e., measuring the loss instead of the reward), one can replace $R(d)$ by $-R(d)$. All properties preserve. Given the coherent property of CVaR shown in \cite{artzner1999coherent}, we have the following proposition for $M_{0,\gamma}(d)$.
 \begin{proposition}\label{propertyofM0}
 	The following properties of $M_{0,\gamma}(d)$ hold.
 	\begin{itemize}
 		\item[(a)] If $R$ is shifted by a  constant $c$, then $M_{0,\gamma}(d)$ is also shifted by the same constant $c$;
 		\item[(b)] If $R$ is multiplied by a positive constant $c$, then the corresponding $M_{0,\gamma}(d)$ is also multiplied by the same constant $c$;
 		\item[(c)] Given two IDRs $d_1$ and $d_2$, if $R(d_1) \leq R(d_2)$ almost surely, then
 		\begin{equation*}\label{monotonicity}
 		M_{0,\gamma}(d_1) \leq M_{0,\gamma}(d_2);
 		\end{equation*}
 		\item[(d)] Given IDR $d$, $M_{0,\gamma}(d) \leq \min\{V(d), Q_\gamma(R(d))\}$.
 	\end{itemize}
 In addition, if outcome $R = c$ almost surely, then $M_{0,\gamma}(d) = c$.
 \end{proposition}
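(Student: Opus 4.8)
The plan is to recognize $M_0(d)$ as the Rockafellar--Uryasev CVaR functional \eqref{CVaR compute} evaluated at the law of $R(d)$, i.e.\ $M_0(d)=S(F_{R(d)})$, and then verify each item by elementary manipulations of the defining supremum; no continuity of the outcome distribution is needed since we work throughout with the infimal/supremal representation. Because $R$ has bounded support, $R(d)\in L^1(P^d)$, so every expectation below is finite and the supremum defining $M_0(d)$ is finite-valued, which legitimizes the changes of variables used in parts (a) and (b). Items (a), (b) and the monotonicity property \eqref{monotonicity} are precisely the translation-equivariance, positive-homogeneity and monotonicity axioms underlying coherence of $S$ \cite{artzner1999coherent}, but I would give the short self-contained arguments rather than invoke that machinery.

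For (a), replacing $R(d)$ by $R(d)+c$ and substituting $\beta=\alpha-c$ turns the supremand into $\beta+c-\frac{1}{\gamma}\Eb[(\beta-R(d))_+]$; taking the supremum over $\beta$ and pulling out the constant gives $M_0(d)+c$. For (b), replacing $R(d)$ by $cR(d)$ with $c>0$, substituting $\beta=\alpha/c$, and using $(c\beta-cR(d))_+=c\,(\beta-R(d))_+$, turns the supremand into $c\big(\beta-\frac{1}{\gamma}\Eb[(\beta-R(d))_+]\big)$, whose supremum over $\beta$ is $c\,M_0(d)$. For the monotonicity property, I would fix $\alpha$ and use that $t\mapsto(\alpha-t)_+$ is nonincreasing: if $R(d_1)\le R(d_2)$ a.s.\ then $(\alpha-R(d_1))_+\ge(\alpha-R(d_2))_+$ a.s., hence $\Eb[(\alpha-R(d_1))_+]\ge\Eb[(\alpha-R(d_2))_+]$, so for every $\alpha$ the supremand for $d_1$ is no larger than that for $d_2$; taking suprema preserves the inequality, giving \eqref{monotonicity}.

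For (d), the bound $M_0(d)\le V(d)=\Eb[R(d)]$ follows from $(\alpha-R(d))_+\ge\alpha-R(d)$ pointwise, $1/\gamma\ge1$ (since $\gamma\in(0,1)$), and nonnegativity of the positive part: for every $\alpha$,
\[ \alpha-\frac{1}{\gamma}\Eb[(\alpha-R(d))_+] \;\le\; \alpha-\Eb[(\alpha-R(d))_+] \;\le\; \alpha-\Eb[\alpha-R(d)] \;=\; \Eb[R(d)], \]
and taking the supremum over $\alpha$ gives the claim. The bound $M_0(d)\le Q_\gamma(R(d))$ I would get from \cite[Theorem~1]{rockafellar2000optimization}, by which the supremum in \eqref{CVaR compute} is attained at $\alpha^\star=Q_\gamma(R(d))$; then $M_0(d)=\alpha^\star-\frac{1}{\gamma}\Eb[(\alpha^\star-R(d))_+]\le\alpha^\star$ since the subtracted term is nonnegative (equivalently, this is the inequality $S(F_Y)\le Q_\gamma(F_Y)$ already quoted from \cite{pflug2000some,rockafellar2002conditional}). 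Finally, if $R=c$ a.s.\ then $R(d)=c$ a.s.\ for every $d$, so the supremand equals $\alpha$ for $\alpha\le c$ and equals $\alpha(1-1/\gamma)+c/\gamma$, strictly decreasing in $\alpha$, for $\alpha>c$; hence the supremum is $c$. (This also follows from (a) applied to the case $R\equiv0$, where $M_0(d)=0$ is checked directly.)

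I do not expect a genuine obstacle: the content is a chain of one-line reductions. The only points needing mild care are confirming that the bounded-support assumption renders the supremum finite and the algebra valid, and — in the second half of (d) — invoking \cite[Theorem~1]{rockafellar2000optimization} (or the quoted inequality) to locate the maximizer at the $\gamma$-quantile, since merely evaluating the supremand at $\alpha=Q_\gamma(R(d))$ bounds only that single value and not the supremum without an attainment statement.
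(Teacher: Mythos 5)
Your proposal is correct and follows essentially the same route as the paper: the paper obtains (a)--(c) by noting that $M_0(d)$ is just the CVaR functional \eqref{CVaR compute} applied to the law of $R(d)$ and invoking the coherence properties of CVaR from \cite{artzner1999coherent}, and gets (d) from $S(F_Y)\le Q_\gamma(F_Y)$ together with the comparison with the mean, exactly the facts you use. Your one-line verifications (change of variables for translation and positive homogeneity, pointwise monotonicity of the supremand, and the attainment of the supremum at the $\gamma$-quantile via \cite[Theorem~1]{rockafellar2000optimization}) simply make those citations self-contained and are all sound.
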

\begin{remark}
 Proposition \eqref{propertyofM0} justifies the use of $M_{0,\gamma}(d)$. In particular, $(a)$ and $(b)$ demonstrate that $M_{0,\gamma}(d)$ is not affected by a constant shift or multiplication. $(c)$ implies that if one IDR is no worse than the other, $M_{0,\gamma}(d)$ preserves the preferences.
	The last property  indicates that $M_{0,\gamma}(d)$ is more conservative than the expected outcome and the quantile criterion when evaluating an IDR $d$.
\end{remark}
If the distribution of $R(d)$ is continuous, then we could rewrite \eqref{M0} as 
\begin{equation}\label{mix fun 1}
M_{0,\gamma}(d) = \frac{\Eb^d[R\mathbb{I}(R \leq Q_\gamma(R(d)))]}{\gamma}.
\end{equation}
Note that
\begin{equation}\label{average risk}
\begin{aligned}
P(R(d) < \alpha)
= \Eb[\sum_{a \in \calA} \mathbb{I}(d(\Xb) =a)P(R < \alpha | \Xb, A = a)]= \Eb[P(R < \alpha | \Xb, A = d(\Xb))].
\end{aligned}
\end{equation}
Then $Q_\gamma(R(d))$ can be further expressed as 
\begin{equation}\label{Q_gamma}
Q_\gamma(R(d)) = \inf \left\{ \alpha \, | \, \Eb[P(R < \alpha | \Xb, A = d(\Xb))] \geq \gamma \right\},
\end{equation}
which can be interpreted as the average $\gamma$-quantile of $R$ under the decision rule $d$. Correspondingly $M_{0,\gamma}(d)$ can be understood as the $\gamma$-average CVaR. 

According to Proposition \ref{propertyofM0} (d), $M_{0,\gamma}(d)$ can be regarded as a lower bound of $V(d)$ and $Q_\gamma(R(d))$. Based on this, maximizing $M_{0,\gamma}(d)$ can thus potentially improve both $V(d)$ and $Q_\gamma(R(d))$. Then the optimal IDR under our proposed robust criterion $M_{0,\gamma}(d)$ is defined as
\begin{equation}\label{optimal d under M0}
d_{0, \gamma} \in \argmax_{d \in \calD_0} \ M_{0,\gamma}(d).
\end{equation}
The optimal IDR with respect to $M_{0,\gamma}(d)$ is to select a treatment/decision with the largest $\gamma$-average CVaR. Moreover, if we again use hypographical representation similar as those in \eqref{epi} and \eqref{epi quantile}, then we could formulate \eqref{optimal d under M0} as the following constraint optimization problem.
\begin{equation}\label{constrain M0}
\begin{aligned}
\max_{d \in \calD_0, t \in \mathbb{R}} & \epc  t\\
\text{subject to} & \epc \sup_{ \alpha \in \mathbb{R}} \left\{\alpha - \frac{1}{\gamma}\Eb[(\alpha - R(d))_+] \right\} \geq t.
\end{aligned}
\end{equation}
Based on the formulation \eqref{constrain M0}, $d_{0, \gamma}$ can be interpreted as the best IDR with the average lower tail of the outcome being at least as good as some certain threshold. Besides, based on statement (d) in Proposition \eqref{propertyofM0}, the constraint set in \eqref{constrain M0} implies $\min\{V(d), Q_\gamma(R(d))\} \geq t$. Therefore, by using $M_{0,\gamma}(d)$, the resulting optimal $d_{0, \gamma}$ can guarantee both $(1-\gamma) \times 100\%$ of the population's outcomes and the expected outcome no worse than some threshold $t$.

By a similar derivation as that in \eqref{value fun}, we have the following proposition.
 \begin{proposition}\label{M0equiv}
 	$M_{0,\gamma}(d) =\sup_{ \alpha \in \mathbb{R}} \left\{\alpha - \frac{1}{\gamma}\Eb[\frac{\mathbb{I}(A = d(\Xb))}{\pi(A | \Xb)}(\alpha - R)_+] \right\}.$
 \end{proposition}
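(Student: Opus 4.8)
The plan is to establish the claimed identity pointwise in the scalar variable $\alpha$ and then pass to the supremum. Fix $\alpha \in \mathbb{R}$ and set $g_\alpha(r) := (\alpha - r)_+$, which for each fixed $\alpha$ is a bounded measurable function of $r$ since $R$ has bounded support. The first step is to show
\begin{equation*}
\Eb\big[(\alpha - R(d))_+\big] = \Eb\left[\frac{\mathbb{I}(A = d(\Xb))}{\pi(A \mid \Xb)}(\alpha - R)_+\right],
\end{equation*}
which is exactly the chain of equalities in \eqref{value fun} with the integrand $R(d)$ replaced by $g_\alpha(R(d))$ and $R$ replaced by $g_\alpha(R)$.

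First I would condition on $\Xb$ and use that $d$ is a deterministic measurable function of $\Xb$, so that conditional on $\Xb$ the random variable $R(d) = R(d(\Xb))$ is a measurable selection from $\{R(1), R(-1)\}$; Assumption \ref{confounder} then gives $R(d) \independent A \mid \Xb$, whence $\Eb[g_\alpha(R(d)) \mid \Xb] = \Eb[g_\alpha(R(d)) \mid \Xb, A = d(\Xb)]$. Next, SUTVA gives $R = R(A)$, so on the event $\{A = d(\Xb)\}$ one has $R = R(d(\Xb)) = R(d)$, and therefore $\Eb[g_\alpha(R(d)) \mid \Xb, A = d(\Xb)] = \Eb[g_\alpha(R) \mid \Xb, A = d(\Xb)]$. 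Finally, the standard inverse-propensity identity — which is where Assumption \ref{overlap} is used, to guarantee $\pi(a \mid \Xb)$ is bounded away from $0$ — yields
\begin{equation*}
\Eb\left[\frac{\mathbb{I}(A = d(\Xb))}{\pi(A \mid \Xb)}g_\alpha(R) \,\Big|\, \Xb\right] = \sum_{a \in \calA}\mathbb{I}(d(\Xb) = a)\,\Eb[g_\alpha(R) \mid \Xb, A = a] = \Eb[g_\alpha(R) \mid \Xb, A = d(\Xb)].
\end{equation*}
Taking $\Eb_\Xb$ of both sides and combining the three displays proves the pointwise identity.

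With the pointwise identity in hand, for every $\alpha$ we have $\alpha - \tfrac{1}{\gamma}\Eb[(\alpha - R(d))_+] = \alpha - \tfrac{1}{\gamma}\Eb\big[\tfrac{\mathbb{I}(A = d(\Xb))}{\pi(A \mid \Xb)}(\alpha - R)_+\big]$, so the two functions of $\alpha$ whose suprema define $M_0(d)$ in \eqref{M0} and the claimed expression coincide identically; hence their suprema over $\alpha \in \mathbb{R}$ are equal, which is the assertion.

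I do not expect a genuine obstacle: the argument is a verbatim re-run of the value-function derivation \eqref{value fun} applied to the integrand $(\alpha - R(d))_+$ in place of $R(d)$. The only point worth stating carefully is the passage $R(d) \independent A \mid \Xb$ from Assumption \ref{confounder}, which is phrased for the pair $(R(1), R(-1))$ rather than for $R(d)$; this is handled by noting that $d(\Xb)$ is $\sigma(\Xb)$-measurable, so $R(d)$ is a deterministic function of $(\Xb, R(1), R(-1))$ and conditional independence is preserved under such measurable maps. Integrability causes no difficulty because $R$ is bounded, so $g_\alpha(R)$ and $g_\alpha(R(d))$ are bounded for each fixed $\alpha$ and the weights $\mathbb{I}(A=d(\Xb))/\pi(A\mid\Xb)$ are bounded by $1/c$.
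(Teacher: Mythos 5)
Your proof is correct and follows essentially the same route as the paper, which obtains Proposition \ref{M0equiv} precisely by re-running the derivation in \eqref{value fun} with the integrand $(\alpha-R(d))_+$ for each fixed $\alpha$ and then taking the supremum. Your extra care about why $R(d)\independent A\mid\Xb$ follows from Assumption \ref{confounder} (measurability of $d(\Xb)$) is a fair elaboration of the same argument, not a different approach.
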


The definition in \eqref{M0} and Proposition \eqref{M0equiv} give us an efficient way to compute the optimal IDR $d_{0, \gamma}$ and $\alpha_0$ via jointly  optimizing
\begin{equation}\label{M0compute}
	(d_{0, \gamma}, \alpha_0) \in \underset{\alpha \in \mathbb{R}, d \in \calD_0}{\argmax} \left\{ \alpha - \frac{1}{\gamma}\Eb^d[(\alpha - R)_+] \right\}.
\end{equation}
\subsection{Dual Representation}
Note that $M_{0,\gamma}(d)$ involves concave maximization with respect to $\alpha$. Thus it would be useful to investigate its dual representation by making use of convex duality theory in optimization (e.g., \cite{rockafellar1974conjugate}). To begin with, we first define the following set:
\begin{equation*}
	\calV_0^d := \{ V \in L^1(\calT, \calF_1, P^d) \, | \, \Eb^d[V] = 1, \, 0 \leq V(\omega) \leq \frac{1}{\gamma}, \; \text{for} \; \omega \in \calT \, \text{almost surely} \},
\end{equation*}
where $P^d$ is the probability measure under the IDR $d$.
We have the following theorem that gives the dual representation of $M_{0,\gamma}(d)$. 
\begin{theorem}\label{thm: duality}
	$M_{0,\gamma}(d) = \inf_{V \in \calV_0^d} \Eb^d[VR]$.
\end{theorem}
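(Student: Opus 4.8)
The plan is to read Theorem \ref{thm: duality} as a strong-duality statement for the concave program defining $M_0(d)$, starting from the representation $M_0(d)=\sup_{\alpha\in\mathbb{R}}\big\{\alpha-\tfrac1\gamma\Eb^d[(\alpha-R)_+]\big\}$ supplied by Proposition \ref{M0equiv} (here and below $\Eb^d$ denotes expectation under $P^d$, so that $\calV_0^d\subseteq L^1(\calT,\calF_1,P^d)$). The easy direction is weak duality. For any $\alpha\in\mathbb{R}$ and any $V\in\calV_0^d$ one has the pointwise bound $\tfrac1\gamma(\alpha-R)_+\ge V(\alpha-R)$: on $\{R\le\alpha\}$ this reads $(\tfrac1\gamma-V)(\alpha-R)\ge0$, true since $V\le\tfrac1\gamma$; on $\{R>\alpha\}$ the left side is $0$ and the right side is $\le0$ since $V\ge0$. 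Taking $\Eb^d$ and using $\Eb^d[V]=1$ gives $\alpha-\tfrac1\gamma\Eb^d[(\alpha-R)_+]\le\Eb^d[VR]$; taking $\sup_\alpha$ then $\inf_V$ yields $M_0(d)\le\inf_{V\in\calV_0^d}\Eb^d[VR]$.

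For the reverse inequality I would use a minimax argument. Writing $\tfrac1\gamma(\alpha-R)_+=\max_{0\le v\le1/\gamma}v(\alpha-R)$ pointwise and interchanging this maximization with $\Eb^d$ — legitimate by the interchange-of-integral-and-maximization theorem in Rockafellar's conjugate-duality framework (\cite{rockafellar1974conjugate}), since the feasible set $\mathcal{W}:=\{V\in L^1(\calT,\calF_1,P^d):0\le V\le1/\gamma\ \text{a.s.}\}$ is decomposable and the integrand is a normal integrand — one obtains
\begin{equation*}
M_0(d)=\sup_{\alpha\in\mathbb{R}}\ \inf_{V\in\mathcal{W}}\Big\{\Eb^d[VR]+\alpha\big(1-\Eb^d[V]\big)\Big\}.
\end{equation*}
The bracketed expression is affine in $\alpha$ and affine in $V$, hence concave and continuous in $\alpha$ and convex and weak-$*$ continuous in $V$ — the latter because $R$ is bounded, so $R$ and the constant $1$ lie in $L^1(P^d)$ — while $\mathcal{W}$ is convex and, being a bounded order interval in $L^\infty$, weak-$*$ compact by Banach--Alaoglu. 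Sion's minimax theorem then permits swapping $\sup_\alpha$ and $\inf_V$; since $\sup_{\alpha\in\mathbb{R}}\alpha(1-\Eb^d[V])$ equals $0$ if $\Eb^d[V]=1$ and $+\infty$ otherwise, the right-hand side reduces to $\inf\{\Eb^d[VR]:V\in\mathcal{W},\ \Eb^d[V]=1\}=\inf_{V\in\calV_0^d}\Eb^d[VR]$, which completes the proof.

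I expect the two interchange steps — pulling $\max_{0\le v\le1/\gamma}$ out of $\Eb^d$, and swapping $\sup_\alpha$ with $\inf_V$ — to be the only delicate points, both requiring care about the function-space topology and a measurable-selection argument; boundedness of $R$ is what keeps the relevant integrals finite and the maps $V\mapsto\Eb^d[VR]$, $V\mapsto\Eb^d[V]$ weak-$*$ continuous. An alternative that avoids the minimax theorem is to exhibit the optimal dual variable explicitly: with $\alpha^*:=Q_\gamma(R(d))$, which satisfies $P^d(R<\alpha^*)\le\gamma\le P^d(R\le\alpha^*)$ and attains the supremum in $M_0(d)$ by \cite[Theorem 1]{rockafellar2000optimization}, set $V^*:=\tfrac1\gamma\mathbb{I}(R<\alpha^*)+\beta\,\mathbb{I}(R=\alpha^*)$ with $\beta\in[0,1/\gamma]$ chosen so that $\Eb^d[V^*]=1$ (possible exactly because of those two inequalities on $\alpha^*$). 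Then $V^*\in\calV_0^d$, and $\tfrac1\gamma(\alpha^*-R)_+=V^*(\alpha^*-R)$ holds $P^d$-almost surely by construction, so the weak-duality inequality is an equality for the pair $(\alpha^*,V^*)$ and hence $M_0(d)=\Eb^d[V^*R]\ge\inf_{V\in\calV_0^d}\Eb^d[VR]$. The one subtlety in this route is the handling of a possible atom of $R(d)$ at $\alpha^*$, which is what forces the extra term $\beta\,\mathbb{I}(R=\alpha^*)$; when $R(d)$ has no atom there the construction collapses to $V^*=\tfrac1\gamma\mathbb{I}(R\le\alpha^*)$ and recovers \eqref{mix fun 1}.
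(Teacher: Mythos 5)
Your proof is correct, and both of your routes establish the identity: the weak-duality direction via the pointwise bound $\tfrac1\gamma(\alpha-R)_+\ge V(\alpha-R)$ for $0\le V\le\tfrac1\gamma$ is exactly right, and the reverse direction works either through Sion's minimax theorem (the set $\{0\le V\le 1/\gamma\}$ is indeed weak-$*$ compact in $L^\infty(P^d)$ and the Lagrangian is bilinear, so the swap is legitimate) or, more simply, through the explicit dual variable $V^*=\tfrac1\gamma\mathbb{I}(R<\alpha^*)+\beta\,\mathbb{I}(R=\alpha^*)$ with $\alpha^*$ a maximizer of the Rockafellar--Uryasev program; your handling of the atom at $\alpha^*$ via the choice of $\beta\in[0,1/\gamma]$ is precisely the point where a naive construction would fail, and the two quantile inequalities $P^d(R<\alpha^*)\le\gamma\le P^d(R\le\alpha^*)$ (which characterize any maximizer of the concave function in $\alpha$) are exactly what make $\Eb^d[V^*]=1$ feasible. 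This differs in flavor from the route the paper signals: the paper frames the result inside Rockafellar's conjugate-duality machinery (\cite{rockafellar1974conjugate}), treating $M_0(d)$ as the optimal value of a perturbed convex program whose dual is the minimization over $\calV_0^d$; that framework is what feeds directly into the $f$-divergence reformulation and the distributionally robust interpretation developed after the theorem. Your second route buys elementarity and an explicit attaining dual solution (showing the infimum is in fact a minimum, and recovering \eqref{mix fun 1} in the atomless case), while the conjugate-duality/minimax viewpoint buys generality — it extends with no extra work to the conditional version used for $M_2(d)$ in Theorem \ref{thm: duality II} and to unbounded $R$ under integrability conditions, where your weak-$*$ compactness argument would need $R\in L^1(P^d)$ but otherwise survives.
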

\textbf{Dual representation of $M_{0,\gamma}(d)$}: According to the dual representation of $M_{0,\gamma}(d)$ and its proof in the appendix, we can define a conditional probability measure $P^\Vb(B) = \int_B V dP^d$ for any measurable set $B \in \calF_1$, where $V \in \calV^d_0$. Then $V = \frac{dP^\Vb}{dP^d}$. Define
	\[
	\zeta_\gamma(u) =
	\begin{cases}
	0 & \text{if \ $0 \leq u \leq \frac{1}{\gamma}$} \\
	+\infty & \text{otherwise}\\
	\end{cases}.
	\]
Then we can further rewrite $\Eb^d[VR] $ in $M_{0,\gamma}(d)$ as
\begin{equation*}
\begin{aligned}
\Eb^d[RV] & = \Eb^d[R\frac{\text{d}P^\Vb}{\text{d}P^d}] + \Eb\left[\zeta_\gamma(\frac{\text{d}P^\Vb}{\text{d}P^d})\right] = \Eb_\Xb[\Eb_{P^\Vb_{| \Xb}}[R]] + \Eb_\Xb\left[\int \zeta_\gamma(\frac{dP^\Vb}{dP^d}) dP^d\right]  = \Eb_{P^\Vb}[R] + I_{\zeta_\gamma}(\frac{dP^\Vb}{dP^d}), 
\end{aligned}
\end{equation*}
where $I_{\zeta_\gamma}(\cdot)$ can be interpreted as the $f$-divergence distance between $P^\Vb$ and $P^d$. Then $M_{0,\gamma}(d) = \inf_{P^\Vb \ll P^d}\Eb_{P^\Vb}[R] + I_{\zeta_\gamma}(\frac{dP^\Vb}{dP^d})$, where that $u \ll v$ denotes the probability measure $u$ is absolutely continuous with respect to the probability measure $v$. Thus the optimal IDR can also be written as
\begin{equation*}
d_{0, \gamma} \in \argmax_{d \in \calD_0} \left\{\inf_{P^\Vb \ll P^d}\Eb_{P^\Vb}[R] +  I_{\zeta_\gamma}(\frac{dP^\Vb}{dP^d} )\ \right\},
\end{equation*}
which can be interpreted as choosing an optimal decision rule in terms of the worst expected outcome within the $f$-divergence distance from the original distribution $P^d$. 

According to our problem setting, define the density under $P^d$ as $f_0(x)\mathbb{I}(d(x) = a) f_1(r | x, a)$. Since $P^\Vb \ll P^d$, then the density under $P^\Vb$ should be $v_0(x)\mathbb{I}(d(x) = a) v(r | x, a)$ for some conditional probability density $v_0(x)$ and $v_1(r | x, a)$. Then we have $\frac{d P^\Vb}{d P^d} =\frac{v_0(x) v_1(r | x, a= d(x))}{f_0(x)f_1(r | x, a= d(x))}$ by the chain rule. Therefore, we can further rewrite $M_{0,\gamma}(d)$ as
\begin{equation}\label{min-max M_1}
M_{0,\gamma}(d) = \inf_{P^\Vb} \left\{\Eb_{P^\Vb}[R] \, | \, P^\Vb \ll P^d, \; 0 \leq  \frac{v_0(\Xb) v_1(r | \Xb, a= d(\Xb))}{f_0(\Xb)f_1(r | \Xb, a= d(\Xb))} \leq \frac{1}{\gamma}, \; \mbox{almost surely} \right\}.
\end{equation}
This gives us a natural link to distributionally robust statistical models that can evaluate a decision rule under ambiguity. Maximizing $M_{0,\gamma}(d)$ over $\calD_0$ is equivalent to identifying an optimal IDR that is robust to the contamination of both outcome $R$ and covariates $\Xb$ characterized by a probability constraint set. Recently, \cite{mo20} investigated distributionally robust ITRs by directly optimizing the expected rewards for the worse ITR within a perturbation set around the training distribution. Their method mainly focuses on potential covariate shifts. In contrast, by using a CVaR-based criterion, our proposed method is robust to distributional changes of both the outcome and covariates.
\section{Statistical Estimation and Optimization}
In this section, we discuss the estimation and optimization procedures for Problem \eqref{M0equiv} given observed data. 
%
The optimization in \eqref{M0compute} can be rewritten as
\begin{equation*}
	\max_{ \alpha \in \mathbb{R}, d \in \calD} \Eb[\frac{ (\alpha - \frac{(\alpha - R)_+}{\gamma})\mathbb{I}(A = d(\Xb))}{\pi(A | \Xb)}],
\end{equation*}
where $\calD$ is some pre-specified classes of decision rules such as the linear ones. Consider the binary treatment setting and let $d(\Xb) = \sign (f(\Xb))$. Suppose we observe independently and identically distributed data $(\Xb_i, A_i, R_i); i = 1, \cdots, n$. Then we can estimate the optimal IDR via the empirical approximation:
\begin{equation}\label{empirical M0}
\max_{ \alpha \in \mathbb{R}, d \in \calD} \frac{1}{n}\sum_{i = 1}^{n} \frac{\mathbb{I}(A_i = \sign(f(\Xb_i)))}{\pi(A_i | \Xb_i)}(\alpha - \frac{(\alpha - R_i)_+}{\gamma}).
\end{equation}
It is well known that optimization over indicator functions is NP hard. Alternatively, similar to \cite{zhou2017residual}, we can replace the $0$-$1$ loss function by the following smooth truncated loss,
\[
S(u) =
\begin{cases}
0 & \text{if $u \leq -\delta$} \\
(1+u/\delta)^2 / 2 & \text{if $0 > u \geq -\delta$} \\
1 - (1-u/\delta)^2/2 & \text{if $\delta > u \geq 0$}\\
1 & \text{if $u > \delta$},
\end{cases}
\]
 and then use a functional margin representation to express $\mathbb{I}(A_i = \sign(f(\Xb_i))$ as $\mathbb{I}(A_i f(\Xb_i) > 0)$ for each $i$.
The corresponding function plot of $S(u)$ is shown in Figure \ref{fig:surrogate} with $\delta = 1$. From the plot, we can see that the smooth approximation $S(u)$ is very close to the  $0$-$1$ loss. 
The parameter $\delta$ can control the closeness of this approximation. In practice, we can simply choose $\delta = 1$.
\begin{figure}[H]
	\begin{center}
		\includegraphics[scale = 0.65]{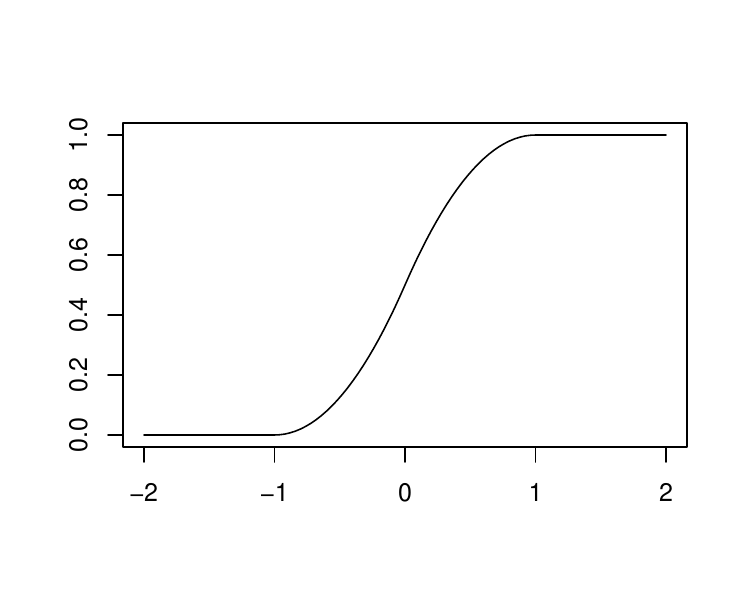}
		\caption{Plot of the smooth surrogate loss function with $\delta = 1$}
		\label{fig:surrogate}
	\end{center}
\end{figure}
\noindent
By using the surrogate function, we can estimate the optimal IDR under $M_{0,\gamma}(d)$ via computing
\begin{equation}\label{dc obj}
\min_{ \alpha \in \mathbb{R}, f \in \calH} \frac{1}{n}\sum_{i = 1}^{n} \frac{S(A_i f(\Xb_i))}{\pi(A_i | \Xb_i)}(-\alpha + \frac{(\alpha - R_i)_+}{\gamma}) + \frac{\lambda_n}{2}||f||^2_{\calH},
\end{equation}
where $||f||^2_{\calH}$ is some convex penalty function on $f$. For example, if we consider $\calH$ as a Reproducing Kernel Hilbert Space (RKHS), then $||f||^2_{\calH}$ could be the square of the RKHS norm of $f$. The estimated IDR is given by $\hat{d}_{0, \gamma}(\Xb) = \sign(\hat{f}(\Xb))$. Note that Problem \eqref{dc obj} involves a non-convex and potentially non-smooth optimization problem. Recent developments in difference-of-convex (DC) optimization (\cite{pang2016computing}) motivate us to use DC programming to efficiently solve this problem.  Note that $S(u)$ can be expressed as a difference of convex differentiable functions: $S_1(u) - S_2(u)$, where
\vspace{-0.3in}
\begin{multicols}{2}
	 \begin{equation*}
	S_{1}(u) =
	\begin{cases}
	0 & \text{if $u \leq -\delta$} \\
	(1 + u/\delta)^2/2 & \text{if $-\delta < u \leq 0$} \\
	1/2 + u/\delta& \text{if $0 <u$}\\
	\end{cases},
	\end{equation*}\break
	\begin{equation*}
	S_2(u) =
	\begin{cases}
	0 & \text{if $u \leq 0$} \\
	(u/\delta)^2/2 & \text{if $0 < u \leq \delta$} \\
	u/\delta - 1 /2& \text{if $u > \delta$}\\
	\end{cases}.
	\end{equation*}
\end{multicols}
Define
\begin{equation}
	G^{(1)}(f, \alpha) := \frac{1}{n}\sum_{i = 1}^{n} \frac{S(A_i f(\Xb_i))}{\pi(A_i | \Xb_i)}(- (\alpha - \frac{(\alpha - R_i)_+}{\gamma})) + \frac{\lambda_n}{2}||f||^2_{\calH},
\end{equation}
and
\begin{equation}
G^{(1)}_j(f) = \frac{1}{n}\sum_{i = 1}^{n} \frac{S(A_i f(\Xb_i))}{\pi(A_i | \Xb_i)}(- R_j + \frac{(R_j - R_i)_+}{\gamma}).
\end{equation}
The following proposition gives us a way to express \eqref{dc obj} as a DC function.
\begin{proposition}\label{equi prop}
	The following two optimization problems have the same optimal value, $i.e.$,
	\begin{equation}\label{equi}
		\min_{ \alpha \in \mathbb{R}, f \in \calH} G^{(1)}(f, \alpha) = \min_{f \in \calH}\left\{
		\widetilde{G}^{(1)}(f)\right\},
	\end{equation}
	where $\widetilde{G}^{(1)}(f) := \min_{1 \leq j \leq n}\{G^{(1)}_j(f)  \} + \frac{\lambda}{2}||f||^2_{\calH}$. More importantly, the optimal solution sets of $f$ to both problems are the same.	
\end{proposition}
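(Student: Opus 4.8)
The plan is to carry out the inner minimization over $\alpha$ in closed form for each fixed $f$, and to show that it is always attained at one of the observed outcomes $R_1,\dots,R_n$; the stated identity and the coincidence of the $f$-solution sets then follow by routine bookkeeping.

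First I would fix $f\in\calH$ and set $w_i:=S(A_if(\Xb_i))/\pi(A_i\mid\Xb_i)$. By Assumption~\ref{overlap} and since $S\ge 0$, each $w_i\ge 0$ and is finite. Writing $\phi_f(\alpha):=\frac1n\sum_{i=1}^n w_i\big(-\alpha+\tfrac1\gamma(\alpha-R_i)_+\big)$, we have $G^{(1)}(f,\alpha)=\phi_f(\alpha)+\tfrac{\lambda_n}{2}\|f\|_{\calH}^2$, and substituting $\alpha=R_j$ gives $\phi_f(R_j)=G^{(1)}_j(f)$, hence $G^{(1)}(f,R_j)=G^{(1)}_j(f)+\tfrac{\lambda_n}{2}\|f\|_{\calH}^2$. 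As a function of $\alpha$, $\phi_f$ is a nonnegative combination of the affine map $\alpha\mapsto-\alpha$ and the convex piecewise-linear maps $\alpha\mapsto(\alpha-R_i)_+$, so it is convex and piecewise affine with all its breakpoints contained in $\{R_1,\dots,R_n\}$.

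The key step is to show $\inf_{\alpha\in\bbR}\phi_f(\alpha)=\min_{1\le j\le n}\phi_f(R_j)$. On the leftmost cell $(-\infty,\min_i R_i]$ the function $\phi_f$ is affine with slope $-\tfrac1n\sum_i w_i\le 0$, so it is non-increasing there and its infimum over that cell is attained at the right endpoint; on the rightmost cell $[\max_i R_i,\infty)$ it is affine with slope $\tfrac{1-\gamma}{n\gamma}\sum_i w_i\ge 0$ (this is where $\gamma\in(0,1)$ enters), so it is non-decreasing there and its infimum over that cell is attained at the left endpoint; and over each bounded cell between consecutive breakpoints the infimum of an affine function is attained at an endpoint. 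Every endpoint of every cell lies in $\{R_1,\dots,R_n\}$, so $\inf_\alpha\phi_f(\alpha)=\min_j\phi_f(R_j)$. (Alternatively, one may invoke Theorem~1 of \cite{rockafellar2000optimization} applied to the weighted empirical distribution of $R$ with masses $w_i/\sum_k w_k$ when $\sum_i w_i>0$, whose $\gamma$-quantile is one of the $R_j$; the degenerate case $\sum_i w_i=0$, where $\phi_f\equiv 0$, is trivial.) Consequently $\min_{\alpha\in\bbR}G^{(1)}(f,\alpha)=\min_j\big(G^{(1)}_j(f)+\tfrac{\lambda_n}{2}\|f\|_{\calH}^2\big)=\tilde G^{(1)}(f)$.

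Finally, taking the infimum over $f\in\calH$ on both sides of this last identity yields $\min_{\alpha\in\bbR,\,f\in\calH}G^{(1)}(f,\alpha)=\min_{f\in\calH}\tilde G^{(1)}(f)$, the first claim. For the second claim, let $v^\star$ be this common optimal value. If $(f^\star,\alpha^\star)$ is optimal for the left-hand problem, then $\tilde G^{(1)}(f^\star)=\min_\alpha G^{(1)}(f^\star,\alpha)\le G^{(1)}(f^\star,\alpha^\star)=v^\star$, while $\tilde G^{(1)}(f^\star)\ge v^\star$, so $f^\star$ is optimal for the right-hand problem. Conversely, if $f^\star$ is optimal for the right-hand problem, choose $\alpha^\star\in\{R_1,\dots,R_n\}$ attaining $\min_\alpha G^{(1)}(f^\star,\alpha)$ (possible by the key step); then $G^{(1)}(f^\star,\alpha^\star)=\tilde G^{(1)}(f^\star)=v^\star$, so $(f^\star,\alpha^\star)$ is optimal for the left-hand problem and $f^\star$ lies in the projection of its solution set. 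Hence the two $f$-solution sets coincide. The only real obstacle is the key step — confirming that the inner minimization is attained among the data points — and it is precisely there that nonnegativity of the weights (giving convexity) and $\gamma\in(0,1)$ (giving the nonnegative right-hand asymptotic slope, hence boundedness below) are used.
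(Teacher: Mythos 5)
Your proof is correct and follows essentially the same route as the paper's: for fixed $f$ the objective is convex and piecewise affine in $\alpha$, so the inner minimum is attained at one of the knots $R_1,\dots,R_n$, after which one minimizes over $f$ and matches the solution sets. The only difference is that you verify the attainment claim explicitly (nonnegativity of the weights, the signs of the asymptotic slopes via $\gamma\in(0,1)$, and the degenerate all-zero-weight case), details the paper simply asserts.
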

Based on Proposition \ref{equi prop}, instead of solving \eqref{dc obj}, we can equivalently solve the optimization problem on the right hand side of \eqref{equi}.  Let $c_{ij} = \frac{- (R_j - \frac{(R_j - R_i)_+}{\gamma})}{\pi(A_i | \Xb_i)}$ for $i = 1,\cdots, n$ and $j = 1, \cdots, n$, and note that $c_{ij}$ is not necessarily nonnegative. Recall that $S(u) = S_1(u) - S_2(u)$. We can further rewrite $G_j^{(1)}(f)$ as
\begin{equation*}
	\begin{aligned}
		G_j^{(1)}(f) &= \frac{1}{n} \sum_{i = 1}^{n}(\max(c_{ij}, 0)S_1(A_if(\Xb_i)) + \max(-c_{ij}, 0)S_2(A_if(\Xb_i))) \\
		& - \frac{1}{n} \sum_{i = 1}^{n}(\max(c_{ij}, 0)S_2(A_if(\Xb_i)) + \max(-c_{ij}, 0)S_1(A_if(\Xb_i))) := F_j(f) - H_j(f), 
	\end{aligned}
\end{equation*}
where both $F_j(f)$ and $H_j(f)$ are convex functions with respect to $f$ for $j = 1, \cdots, n$. Then we can further decompose
\begin{equation*}\label{dc obj 2}
\begin{aligned}
	& \widetilde{G}^{(1)}(f) =  \min_{1 \leq j \leq n}\{F_j(f) - H_j(f)  \} + \frac{\lambda}{2}||f||^2_{\calH}\\
	= & \sum_{i  = 1}^{n}F_j(f) - \max_{1 \leq j \leq n}\{H_j(f) + \sum_{k \neq j}^{n}F_k(f)  \} + \frac{\lambda}{2}||f||^2_{\calH} 
	:= & F(f) - \max_{1 \leq j \leq n} h_j(f) + \frac{\lambda}{2}||f||^2_{\calH},
\end{aligned}
\end{equation*}
as a DC function, where $h_j(f):= H_j(f) + \sum_{k \neq j}^{n}F_k(f)$. Note that $\widetilde{G}^{(1)}(f)$ is a potentially non-smooth function if there exits multiple $k$'s such that $h_k(f) = \max_{1 \leq j \leq n} h_j(f) $. As discussed by \cite{pang2016computing}, traditional DC programming may  converge to nonsense points. To address this issue, motivated by \cite{pang2016computing}, we define $\calM_\epsilon(f) := \{j \ | \ h_j(f) \geq  \max_{1 \leq k \leq n} h_k(f) - \epsilon\}$, i.e., ``$\epsilon$-argmax" index set, and use the following enhanced probabilistic DC algorithm summarized in Table \ref{alg:dc 1} below to solve Problem \eqref{dc obj}.  
\begin{algorithm}[H]
	\caption{Algorithm for solving \eqref{dc obj}}\label{alg:dc 1}
	\begin{algorithmic}[1]
		\State Given a fixed $\epsilon > 0$, let $f^{(v)}$ be the solution at the $v$ iteration for $v = 1, 2, \cdots$.
		\State Randomly select $j \in \calM_\epsilon(f^{(v)})$, and compute
		\begin{equation}\label{subproblem for dc1}
			f^{(v+1)} \in \argmin_{f \in \calH} \{F(f) - \frac{\partial h_j(f^{(v)})}{\partial f}(f - f^{(v)}) + \frac{\lambda}{2}||f||^2_{\calH}    \}.
		\end{equation}
		\State The algorithm stops when $|\widetilde{G}^{(1)}(f^{(v)}) - \widetilde{G}^{(1)}(f^{(v+1)})| < \kappa$, for some pre-specified positive constant $\kappa$.
	\end{algorithmic}
\end{algorithm}
The proof of the convergence to sharp stationary points by the above algorithm can be found in \cite{pang2016computing}. For the computation of the subproblem \eqref{subproblem for dc1}, efficient algorithms such as the quasi-Newton method can be used. Compared with the algorithm proposed in \cite{qicuiliupang19}, by using the surrogate loss, we avoid directly solving a discontinuous optimization problem. While \cite{qicuiliupang19} transformed the problem into a constraint optimization by using the epigraph representation, it can be less efficient  than our proposed algorithm as the number of constraints is proportional to the number of observations. In addition, the convergence to a shaper stationary point can be guaranteed by our algorithm. For a general treatment on non-convex and non-smooth optimization problems, we refer to \cite{cui2021modern}. 
\section{Analysis of Statistical Convergence}
In this section, we discuss the statistical theory related to our estimation under $M_{0,\gamma}(d)$. For the ease of presentation, we assume that Assumptions \ref{consistency}-\ref{overlap} always hold. With some abuse of notation, we first define
\begin{equation}\label{value fun 1}
M_{0,\gamma}(d, \alpha) = \Eb[\frac{\mathbb{I}(A = d(\Xb))}{\pi(A | \Xb)}(\alpha- \frac{1}{\gamma}(\alpha- R)_+)],
\end{equation}
and denote the optimal solution of maximizing \eqref{value fun 1} as $d_{0, \gamma} = \sign(f_0)$ and $\alpha_0$. Since we use the surrogate loss function $S(u)$, we further define
\begin{equation*}\label{value fun 3}
M_{S}(f, \alpha) = \Eb[\frac{S(Af(\Xb))}{\pi(A | \Xb)}(\alpha- \frac{1}{\gamma}(\alpha - R)_+)]
\end{equation*}
as the surrogate value function. Our theoretical results are similar to the standard regret analysis such as \cite{zhou2017residual}. The additional challenge is to handle the ``nuisance" parameter $\alpha$, which requires additional efforts. For example,  Fisher consistency cannot be directly obtained by using the standard proof  such as that in \cite{zhou2017residual}.
\subsection{Fisher Consistency}
We first establish Fisher consistency of estimating optimal ITRs under $M_S(f, \alpha)$ to justify the use of the surrogate loss $S(u)$, compared with $M_{0,\gamma}(d, \alpha)$. The proof is different from the classical Fisher consistency in classification, which only involves one functional class of interest. Here we need to consider the effect of the surrogate function on estimating $\alpha$, which is thus more involved.
\begin{theorem}\label{thm: optimal IDR}
	For any measurable function $f$ and $\alpha$ and a given $0 < \gamma \leq 1$, if $(f_S^\ast, \alpha_S^\ast)$ maximizes $M_S(f, \alpha)$, then $(\sign(f_S^\ast), \alpha_S^\ast)$ maximizes $M_{0,\gamma}(d, \alpha)$.
\end{theorem}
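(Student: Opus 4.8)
The plan is to reduce both objectives, for each fixed $\alpha$, to a pointwise optimization over the covariate space by conditioning on $\Xb$, and then to exploit two elementary facts about the surrogate $S$ that are read off directly from its piecewise definition: $S$ is nondecreasing, and $S(u)+S(-u)=2$ for every $u\in\mathbb{R}$, with $S(u)=2$ exactly when $u\geq\delta$ and $S(u)=0$ exactly when $u\leq-\delta$ (so $\{S(u):u\in\mathbb{R}\}=[0,2]$).

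First I would recast both value functions in conditional form. For $a\in\{1,-1\}$ put $\psi_a(\Xb;\alpha):=\Eb\big[\alpha-\tfrac1\gamma(\alpha-R)_+\,\big|\,\Xb,A=a\big]$, which is finite since $R$ has bounded support. Conditioning on $(\Xb,A)$ and then on $\Xb$ — the propensity weight $1/\pi(A\mid\Xb)$ cancelling the conditional distribution of $A$ given $\Xb$, exactly as in the computation \eqref{value fun} — yields
$$M_S(f,\alpha)=\Eb_\Xb\big[\,S(f(\Xb))\,\psi_1(\Xb;\alpha)+S(-f(\Xb))\,\psi_{-1}(\Xb;\alpha)\,\big],$$
$$M_0(d,\alpha)=\Eb_\Xb\big[\,\mathbb{I}(d(\Xb)=1)\,\psi_1(\Xb;\alpha)+\mathbb{I}(d(\Xb)=-1)\,\psi_{-1}(\Xb;\alpha)\,\big].$$

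Next, fix $\alpha$ and optimize over $f$ pointwise. Writing $s:=S(f(\Xb))\in[0,2]$ and using $S(-f(\Xb))=2-s$, the integrand of $M_S$ equals the affine function $2\psi_{-1}(\Xb;\alpha)+s\,(\psi_1(\Xb;\alpha)-\psi_{-1}(\Xb;\alpha))$ of $s$, whose maximum over $s\in[0,2]$ is $2\max\{\psi_1(\Xb;\alpha),\psi_{-1}(\Xb;\alpha)\}$; when $\psi_1\neq\psi_{-1}$ this maximum is attained uniquely, at $s=2$ or $s=0$, i.e.\ only at arguments $f(\Xb)$ with $f(\Xb)\geq\delta$ or $f(\Xb)\leq-\delta$, which forces $\sign(f(\Xb))=\argmax_a\psi_a(\Xb;\alpha)$. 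By a standard normal-integrand / measurable-selection argument allowing the maximum to be moved inside $\Eb_\Xb$, this gives $\max_f M_S(f,\alpha)=2\,\Eb_\Xb[\max_a\psi_a(\Xb;\alpha)]=2\max_d M_0(d,\alpha)$, and shows that every maximizer $f$ of $M_S(\cdot,\alpha)$ satisfies $\sign(f(\Xb))\in\argmax_a\psi_a(\Xb;\alpha)$ almost surely, hence $\sign(f)$ maximizes $M_0(\cdot,\alpha)$.

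Finally I would put the pieces together. If $(f_S^\ast,\alpha_S^\ast)$ maximizes $M_S$, then $f_S^\ast$ maximizes $M_S(\cdot,\alpha_S^\ast)$, so by the previous step $\sign(f_S^\ast)$ maximizes $M_0(\cdot,\alpha_S^\ast)$; moreover $\alpha_S^\ast$ maximizes $\alpha\mapsto\max_f M_S(f,\alpha)=2\max_d M_0(d,\alpha)$, hence also $\alpha\mapsto\max_d M_0(d,\alpha)$. Therefore $M_0(\sign(f_S^\ast),\alpha_S^\ast)=\max_\alpha\max_d M_0(d,\alpha)$, i.e.\ $(\sign(f_S^\ast),\alpha_S^\ast)$ maximizes $M_0(d,\alpha)$. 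The main obstacle is the bookkeeping on the set where $\psi_1(\Xb;\alpha)=\psi_{-1}(\Xb;\alpha)$ (and on $\{f_S^\ast(\Xb)=0\}$), where the sign of a pointwise maximizer of $M_S$ is indeterminate; but $\psi_1=\psi_{-1}$ there, so any value of $d(\Xb)$ contributes equally to $M_0$ and optimality is unaffected — one just phrases the selection step to tolerate this non-uniqueness. Justifying the interchange of maximum and expectation is the only other technical point and is routine.
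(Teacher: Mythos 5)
Your proof is correct and follows essentially the same route as the paper's: condition on $\Xb$ to reduce both criteria to a pointwise comparison of the conditional values $\psi_1(\Xb;\alpha)$ and $\psi_{-1}(\Xb;\alpha)$, exploit $S(u)+S(-u)=2$ (with $S$ hitting $0$ and $2$ only beyond $\mp\delta$) to force $\sign(f_S^\ast(\Xb))$ to agree with the conditional argmax off the tie set, and transfer optimality in $\alpha$ through the fixed proportionality $\sup_f M_S(f,\alpha)=2\sup_d M_0(d,\alpha)$, which is exactly how the effect of the surrogate on $\alpha$ is handled. The tie-set and measurable-selection points you flag are indeed the only technical caveats, and your treatment of them is adequate.
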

Based on Theorem \ref{thm: optimal IDR}, instead of $M_{0,\gamma}(d, \alpha)$, we can target on $M_S(d, \alpha)$ equivalently.
\subsection{Excess Value Bound}
Based on Theorem \ref{thm: optimal IDR}, we can further justify the use of the surrogate function $S(u)$ by establishing the following excess value bound for the $0$-$1$ loss in $M_{0,\gamma}(d, \alpha)$.
\begin{theorem}\label{excess value}
	For any measurable function $f, \alpha$ and any probability distribution over $(\Xb, A, R)$,
	\begin{equation*}
	M_{0,\gamma}(d_{0, \gamma}, \alpha_0) - M_{0,\gamma}(\sign(f), \alpha) \leq 2(M_S(f_S^\ast, \alpha_S^\ast) - M_S(f, \alpha)).
	\end{equation*}
\end{theorem}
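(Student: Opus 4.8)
The plan is to integrate out $(A,R)$ first, reducing the inequality to a pointwise (in $\Xb$) comparison that exploits one symmetry of the surrogate $S$. Fixing a scalar $\alpha$, I would set $\phi_\alpha(x,a):=\Eb\big[\alpha-\gamma^{-1}(\alpha-R)_+\mid \Xb=x,\,A=a\big]$ and rewrite, via iterated expectations with the known propensity score exactly as in the derivation of \eqref{value fun} (using Assumptions \ref{confounder}--\ref{overlap}),
\[
M_0(d,\alpha)=\Eb_\Xb\big[\phi_\alpha(\Xb,d(\Xb))\big],\qquad M_S(f,\alpha)=\Eb_\Xb\big[S(f(\Xb))\,\phi_\alpha(\Xb,1)+S(-f(\Xb))\,\phi_\alpha(\Xb,-1)\big].
\]
The key elementary facts about $S$, read directly off the piecewise formula, are that it is nondecreasing, $S(0)=1$, and $S(u)+S(-u)=2$ for every $u$. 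Hence $\tfrac12 S(f(x))$ and $\tfrac12 S(-f(x))$ are convex weights summing to $1$, with $\tfrac12 S(f(x))\ge\tfrac12$ when $f(x)\ge 0$ and $\le\tfrac12$ when $f(x)\le 0$. This immediately gives, for each $\alpha$, $\sup_f M_S(f,\alpha)=\Eb_\Xb\big[2\max_{a\in\calA}\phi_\alpha(\Xb,a)\big]$ (the attained pointwise maximum of a $2$-scaled convex combination of $\phi_\alpha(\Xb,1)$ and $\phi_\alpha(\Xb,-1)$), while $\sup_d M_0(d,\alpha)=\Eb_\Xb[\max_a\phi_\alpha(\Xb,a)]$; taking the supremum over $\alpha$ then yields the identity $M_S(f_S^\ast,\alpha_S^\ast)=2\,M_0(d_0,\alpha_0)$, consistent with (and re-deriving) Theorem \ref{thm: optimal IDR}.

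Substituting this identity together with $M_0(\sign(f),\alpha)=\Eb_\Xb[\phi_\alpha(\Xb,\sign(f(\Xb)))]$ into the claim and cancelling one copy of $M_0(d_0,\alpha_0)$, the statement becomes
\[
\Eb_\Xb\big[S(f(\Xb))\phi_\alpha(\Xb,1)+S(-f(\Xb))\phi_\alpha(\Xb,-1)-\phi_\alpha(\Xb,\sign(f(\Xb)))\big]\ \le\ M_0(d_0,\alpha_0).
\]
I would then prove this pointwise. Fix $x$, write $p=\phi_\alpha(x,1)$, $q=\phi_\alpha(x,-1)$, $t=\tfrac12 S(f(x))\in[0,1]$. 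If $f(x)>0$ then $\phi_\alpha(x,\sign(f(x)))=p$ and $t\ge\tfrac12$, so the bracket equals $2tp+2(1-t)q-p=(2t-1)p+2(1-t)q$, a convex combination of $p$ and $q$ and hence $\le\max\{p,q\}$; the case $f(x)<0$ is symmetric (the bracket becomes $2tp+(1-2t)q$ with $t\le\tfrac12$), and $f(x)=0$ is trivial since then $S(\pm f(x))=1$. Thus the integrand is almost surely at most $\max_a\phi_\alpha(\Xb,a)$, so its expectation is at most $\Eb_\Xb[\max_a\phi_\alpha(\Xb,a)]\le\sup_{\alpha'}\Eb_\Xb[\max_a\phi_{\alpha'}(\Xb,a)]=M_0(d_0,\alpha_0)$, which closes the argument.

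The main obstacle — really the only subtlety — is that $\alpha$ is a single scalar shared across subjects rather than an individualized quantity, so one cannot naively compare the $M_0$-regret with the $M_S$-regret subject by subject. The remedy is exactly the one used above: keep the \emph{exact} identity $M_S(f_S^\ast,\alpha_S^\ast)=2M_0(d_0,\alpha_0)$ (not merely a one-sided bound) and absorb the sub-optimality of the given $\alpha$ into the slack $M_0(d_0,\alpha_0)\ge\Eb_\Xb[\max_a\phi_\alpha(\Xb,a)]$. Once the identity $S(u)+S(-u)=2$ is noticed, everything else is the convex-combination inequality above; the conditioning step and the $\sup$-computation are routine.
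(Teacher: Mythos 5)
Your proof is correct: the conditioning on $\Xb$ to write $M_0(d,\alpha)=\Eb_\Xb[\phi_\alpha(\Xb,d(\Xb))]$ and $M_S(f,\alpha)=\Eb_\Xb[S(f(\Xb))\phi_\alpha(\Xb,1)+S(-f(\Xb))\phi_\alpha(\Xb,-1)]$, the identity $S(u)+S(-u)=2$ giving $M_S(f_S^\ast,\alpha_S^\ast)=2M_0(d_0,\alpha_0)$, and the pointwise convex-combination bound that reduces the claim to $M_S(f,\alpha)-M_0(\sign(f),\alpha)\le M_0(d_0,\alpha_0)$ are exactly the ingredients the paper uses, including the same treatment of the shared scalar $\alpha$ via the slack $\Eb_\Xb[\max_a\phi_\alpha(\Xb,a)]\le M_0(d_0,\alpha_0)$. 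This matches the paper's argument in substance, so no further changes are needed.
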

Theorem \ref{excess value} gives us a way of bounding the difference between the optimal IDR and the estimated IDR under $M_{0,\gamma}(d, \alpha)$ by using $M_S(d, \alpha)$ instead.
\subsection{Convergence Rate}
In order to obtain the finite sample performance of our estimated optimal IDR under $M_{0,\gamma}(d, \alpha)$, it is enough to focus on the difference of $M_S(d, \alpha)$ between the estimated optimal IDR and the optimal ITR based on Theorem \ref{excess value}. Define
\begin{equation}\label{emp optimal}
	(\hat{f}, \hat{\alpha}) = \argmin_{f \in \calH, \alpha \in \mathbb{R}}O_n(f, \alpha)+ \frac{\lambda_n}{2}||f||^2_{\calH}, 
\end{equation}
where $O_n(f, \alpha) := \frac{1}{n}\sum_{i = 1}^{n} \frac{S(A_i f(\Xb_i))}{\pi(A_i | \Xb_i)}( \frac{(\alpha - R_i)_+}{\gamma} - \alpha)$. For simplicity, in the following we consider $\calH$
be a RKHS. The results can be extended to other scenarios such as the class of linear functions with the $l_1$ penalty. 
Define $O_T(f, \alpha) = -M_S(f, \alpha)$ and let
$(f_{\lambda_n}, \alpha_{\lambda_{n}}) =\argmin_{f \in \calH, \alpha \in \mathbb{R}} O_T(f, \alpha)+ \frac{\lambda_n}{2}||f||^2_{\calH}.$
Then $\calA(\lambda_n) := O_T(f_{\lambda_n}, \alpha_{\lambda_{2n}}) +  \frac{\lambda_n}{2}||f_{\lambda_n}||^2_{\calH} - O_T(f^\ast_S, \alpha_S^\ast)$ is considered to be the approximation error. The following theorem gives us a finite sample upper bound of our estimated optimal IDR and the optimal IDR based on $M_{0,\gamma}(d, \alpha)$.
\begin{theorem}\label{thm key}
	For any distribution $P$ over $(\Xb, A, R)$ such that the reward is uniformly bounded, i.e., $|R| \leq C_0$ for some positive constant $C_0$, if $\calA(\lambda_n) \leq C_1\lambda_n^{w_1}$, where $w_1\in (0, 1]$, then with probability at least $1-\epsilon$,
	\begin{equation*}
	M_{0,\gamma}(d_{0, \gamma}, \alpha_0) - M_{0,\gamma}(\sign(\hat{f}), \hat{\alpha}) \leq C_2 n^{-\frac{w_1}{2w_1 + 1}},
	\end{equation*}
	for some positive constant $C_2$. 
\end{theorem}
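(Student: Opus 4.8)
The plan is to chain the excess-value inequality of Theorem~\ref{excess value} with a standard decomposition of the regularized surrogate risk into stochastic (estimation) error and deterministic (approximation) error, to control the estimation error by a Rademacher-complexity argument, and finally to balance $\lambda_n$. By Theorem~\ref{excess value} applied at $(\hat f,\hat\alpha)$ one has
\[
M_0(d_0,\alpha_0)-M_0(\sign(\hat f),\hat\alpha)\ \le\ M_S(f_S^\ast,\alpha_S^\ast)-M_S(\hat f,\hat\alpha)\ =\ O_S(\hat f,\hat\alpha)-O_S(f_S^\ast,\alpha_S^\ast),
\]
so it suffices to control the surrogate excess risk on the right. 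As a preliminary, for each fixed $f$ the empirical objective is convex and piecewise affine in $\alpha$ with all breakpoints among $R_1,\dots,R_n\in[-C_0,C_0]$ (the argument in the proof of Proposition~\ref{equi prop}), so $\hat\alpha$ may be taken in $[-C_0,C_0]$, and the analogous concavity-in-$\alpha$ argument places $\alpha_{\lambda_n}$ and $\alpha_S^\ast$ there as well. Using $S(0)=1$, $0\le S\le 2$, $\pi\ge c$ (Assumption~\ref{overlap}) and $|\hat\alpha|\le C_0$, comparing $O_n(\cdot,\cdot)+\tfrac{\lambda_n}{2}\|\cdot\|_\calH^2$ at $(\hat f,\hat\alpha)$ with its value at $(0,\hat\alpha)$ yields a deterministic bound $\tfrac{\lambda_n}{2}\|\hat f\|_\calH^2\le B$ with $B=B(C_0,c,\gamma)$, hence $\|\hat f\|_\calH\le r_n:=\sqrt{2B/\lambda_n}$, and likewise $\|f_{\lambda_n}\|_\calH\le r_n$.

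Set $\Delta_n:=\sup\{\,|O_n(f,\alpha)-O_S(f,\alpha)|:\|f\|_\calH\le r_n,\ |\alpha|\le C_0\,\}$. Inserting $\pm O_n$ at both $(\hat f,\hat\alpha)$ and $(f_{\lambda_n},\alpha_{\lambda_n})$, using the optimality of $(\hat f,\hat\alpha)$ for the regularized empirical objective, discarding the nonnegative term $\tfrac{\lambda_n}{2}\|\hat f\|_\calH^2$, and recognizing $\calA(\lambda_n)$ as the leftover deterministic term gives
\[
O_S(\hat f,\hat\alpha)-O_S(f_S^\ast,\alpha_S^\ast)\ \le\ 2\Delta_n+\calA(\lambda_n)\ \le\ 2\Delta_n+C_1\lambda_n^{w_1}.
\]
The per-sample integrand $\ell_{f,\alpha}(\Xb,A,R)=\pi(A\mid\Xb)^{-1}S(Af(\Xb))\big(\alpha-\tfrac1\gamma(\alpha-R)_+\big)$ takes values in $[-M,M]$ on the relevant domain with $M=c^{-1}(C_0+2C_0/\gamma)$, so the bounded-differences (McDiarmid) inequality gives $\Delta_n\le\Eb[\Delta_n]+M\sqrt{2\log(1/\epsilon)/n}$ with probability at least $1-\epsilon$.

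It remains to bound $\Eb[\Delta_n]$. Writing $\ell_{f,\alpha}$ as the product of the $[0,2]$-valued, $(2/\delta)$-Lipschitz map $S(Af(\Xb))$ and the uniformly bounded map $\pi(A\mid\Xb)^{-1}\big(\alpha-\tfrac1\gamma(\alpha-R)_+\big)$, symmetrization together with the standard bound for the Rademacher complexity of a product of two uniformly bounded classes shows $\Eb[\Delta_n]$ is at most a constant multiple of $\mathcal{R}_n(\{S(Af(\cdot)):\|f\|_\calH\le r_n\})$ plus a lower-order term from the second factor; Talagrand's contraction inequality then bounds the former by $(2/\delta)\,\mathcal{R}_n(\{f\in\calH:\|f\|_\calH\le r_n\})$, while the second factor, being uniformly bounded and indexed by the bounded scalar $\alpha\in[-C_0,C_0]$, contributes $O(n^{-1/2})$ up to a logarithmic factor. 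For a Gaussian kernel $\sup_x k(x,x)=1$ gives $\mathcal{R}_n(\{f\in\calH:\|f\|_\calH\le r_n\})\le r_n/\sqrt n$, so for $\lambda_n\le1$, hence $r_n\ge1$, one obtains $\Eb[\Delta_n]\le C_3\,r_n/\sqrt n$. Collecting the pieces, with probability at least $1-\epsilon$,
\[
M_0(d_0,\alpha_0)-M_0(\sign(\hat f),\hat\alpha)\ \le\ C_4\,\lambda_n^{-1/2}n^{-1/2}+C_1\lambda_n^{w_1},
\]
where $C_4$ absorbs the $\sqrt{\log(1/\epsilon)}$ factor; choosing $\lambda_n=n^{-1/(2w_1+1)}$ equalizes the two terms and gives the asserted bound $C_2\,n^{-w_1/(2w_1+1)}$ with $C_2=C_2(\epsilon,C_0,c,\gamma,\delta,C_1,w_1)$.

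The delicate step is the estimation-error bound for $\Delta_n$: the per-sample loss is a \emph{product} $S(Af(\Xb))\cdot\big(\alpha-\tfrac1\gamma(\alpha-R)_+\big)$ rather than a single Lipschitz transform of $f$, and the RKHS ball over which the supremum runs has radius $r_n$ of order $\lambda_n^{-1/2}$, which grows with $n$; one must split the product and handle the joint dependence on the infinite-dimensional $f$ and the scalar $\alpha$ while keeping the dependence on $r_n$ linear, because this linear-in-$r_n$ behaviour — together with the approximation exponent $w_1$ — is precisely what pins down the exponent $w_1/(2w_1+1)$. The remaining steps are routine bookkeeping, relying only on the boundedness afforded by $|R|\le C_0$, Assumption~\ref{overlap}, and the piecewise-affine structure of the objective in $\alpha$.
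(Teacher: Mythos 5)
Your proof is correct and follows essentially the same route as the paper's: reduce to the surrogate excess risk via the excess value bound, confine $\hat\alpha$ to $[-C_0,C_0]$ and $\hat f$ to an RKHS ball of radius $O(\lambda_n^{-1/2})$, control the uniform deviation by McDiarmid plus a Rademacher/contraction argument, add the approximation error $\calA(\lambda_n)\le C_1\lambda_n^{w_1}$, and balance with $\lambda_n = n^{-1/(2w_1+1)}$. Your splitting of the loss into a product of two bounded classes is a minor cosmetic variation (one can equivalently treat the loss as Lipschitz jointly in $(f(\Xb),\alpha)$ and contract directly), and does not change the substance or the resulting exponent $w_1/(2w_1+1)$.
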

The above theorem shows that the difference between our estimated IDR and the optimal IDR under $M_{0,\gamma}(d, \alpha)$ converges to $0$ in probability under some conditions. The derivation requires additional control on the $\hat \alpha$, which can be shown to be uniformly bounded due to the optimization property.  The upper bound assumption on the approximation error $\calA(\lambda_n)$ is analogous to those in the  statistical learning literature such as \cite{steinwart2007fast} to derive the convergence rate. The convergence rate is the same as those in \cite{zhao2012estimating} and \cite{zhou2017residual}. This is not surprising because the proposed $M_{0,\gamma}(d)$ can be roughly regarded as the truncated mean.
\section{Numerical Studies}
We conduct extensive simulation studies and  real data analysis to illustrate the performance of our proposed method. For all simulation settings, we consider binary randomized trials with equal probabilities of patients being assigned to each treatment group. The performance will be similar when the difference in terms of the treatment assigned probabilities is not too large. In the extremely unbalanced case, the variance of the empirical approximation \eqref{dc obj} could be large, which thus requires a large sample to achieve desired performance. For the ease of presentation, we only present results of estimated optimal IDRs under $M_{0,\gamma}(d)$ with the $l_2$ penalty on $f$ in  Problem \eqref{dc obj}, where $f$ is a linear function of $\Xb$. We denote this method as $l2$-$\text{CVaR}_\gamma$.  

All tuning parameters are selected based on the $10$-fold-cross-validation procedure. We select the tuning parameter that maximizes the empirical average of $M_{0,\gamma}(d)$ on the validation set. We compare our methods with the following four methods:
(1) the $l_1$-PLS by \cite{qian2011performance} with the basis function $(1, \Xb, A, \Xb A)$; (2) the RWL by \cite{zhou2017residual} with the  linear kernel and the $l_2$ penalty. Note that RWL used the same truncated hinge loss as ours; (3) the  $\gamma$-quantile optimal treatment regime denoted by $\text{IPWEQ}_\gamma$ (\cite{wang2017quantile}); (4) the model-based $\gamma$-quantile regression for the optimal treatment regime denoted by $\text{QR}_\gamma$ (\cite{linn2017interactive,xiao2019robust}). All results are based on 100 replications.

Since we only consider the randomized design study where $A$ is either $1$ or $-1$ with equal probabilities, we do not compare those methods designed for observational studies such as (\cite{zhang2012robust}).
\subsection{A Motivating Example}
We use one toy example to  demonstrate the necessity of tail control. In particular, the categorical covariate gender $X$ is generated by the uniform distribution over $\{1, -1\}$, where $1$ and $-1$ denote male and female respectively. Based on the randomly assigned treatment, the corresponding outcome $R$ is generated by the following model:
\begin{equation*}
	R = \mathbb{I}(XA =1) \epsilon_1 + \mathbb{I}(XA =-1) \epsilon_2,
\end{equation*}
where $\epsilon_1 \sim \calN(-0.1, 1)$ and $\epsilon_2 \sim \calN(0, 0.5)$. The corresponding plot is given in Figure \ref{fig:toy}. We consider training data with the sample size $n = 200$ and independently generated test data of size $10000$. Based on test data, in Figure \ref{fig:toy2}, we plot box plots of five different outcome distributions if treatments follow  estimated IDRs by $l_1$-PLS, linear-RWL, $\text{IPWEQ}_{0.5}$, $\text{QR}_{0.5}$, and $l2$-$\text{CVaR}_{0.5}$, respectively. Based on these box plots, we can observe that since there is not much difference between these two treatments based on the expected outcome, the empirical mean of value functions resulted from these five methods are indistinguishable. However, besides improving the expected or median outcome for each individual, our methods also control the tails of individuals, thus prefer treatments with less variability. The resulting outcome distribution by our method is more stable, and thus has less variability than the other four methods. If we choose different $\gamma$ for $\text{IPWEQ}_\gamma$ and $\text{QR}_\gamma$, e.g., 0.25, their resulting outcome distributions will be similar to ours since they focus more on the tail like our proposed method.
\begin{figure}[H]
	\begin{center}
		\includegraphics[scale = 0.65]{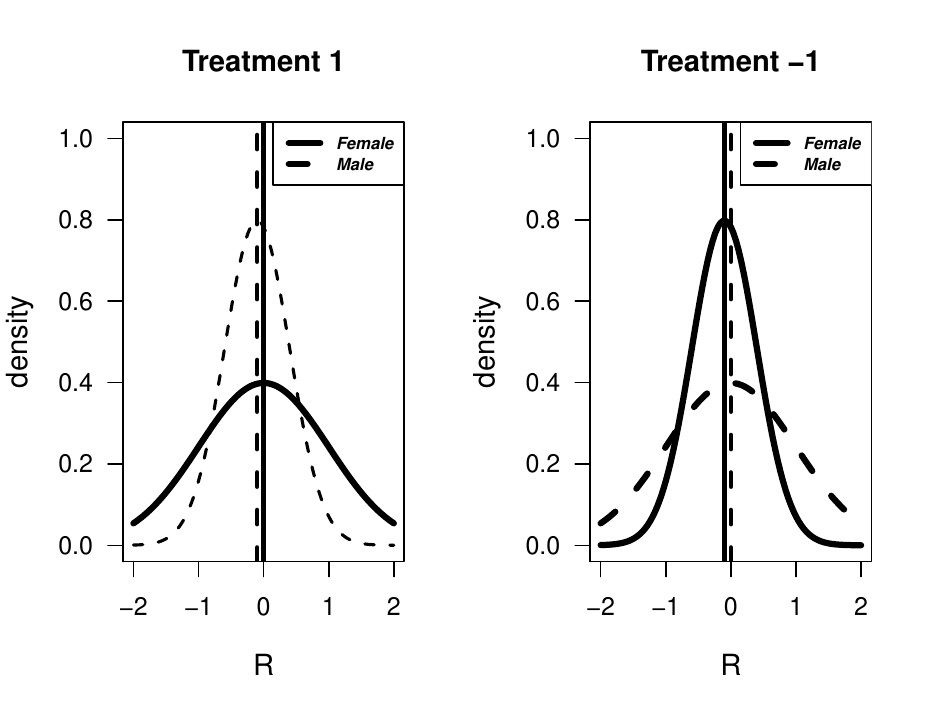}
		\caption{Plots of a motivating example. The dash and solid lines in the left plot show the probability densities of $\calN(-0.1, 0.5)$ and $\calN(0, 1)$ respectively. The dash and solid lines in the right plot correspond to the probability densities of $\calN(0, 1)$ and $\calN(-0.1, 0.5)$ respectively.}
		\label{fig:toy}
	\end{center}
\end{figure}

\begin{figure}[H]
	\begin{center}
		\includegraphics[scale = 0.5]{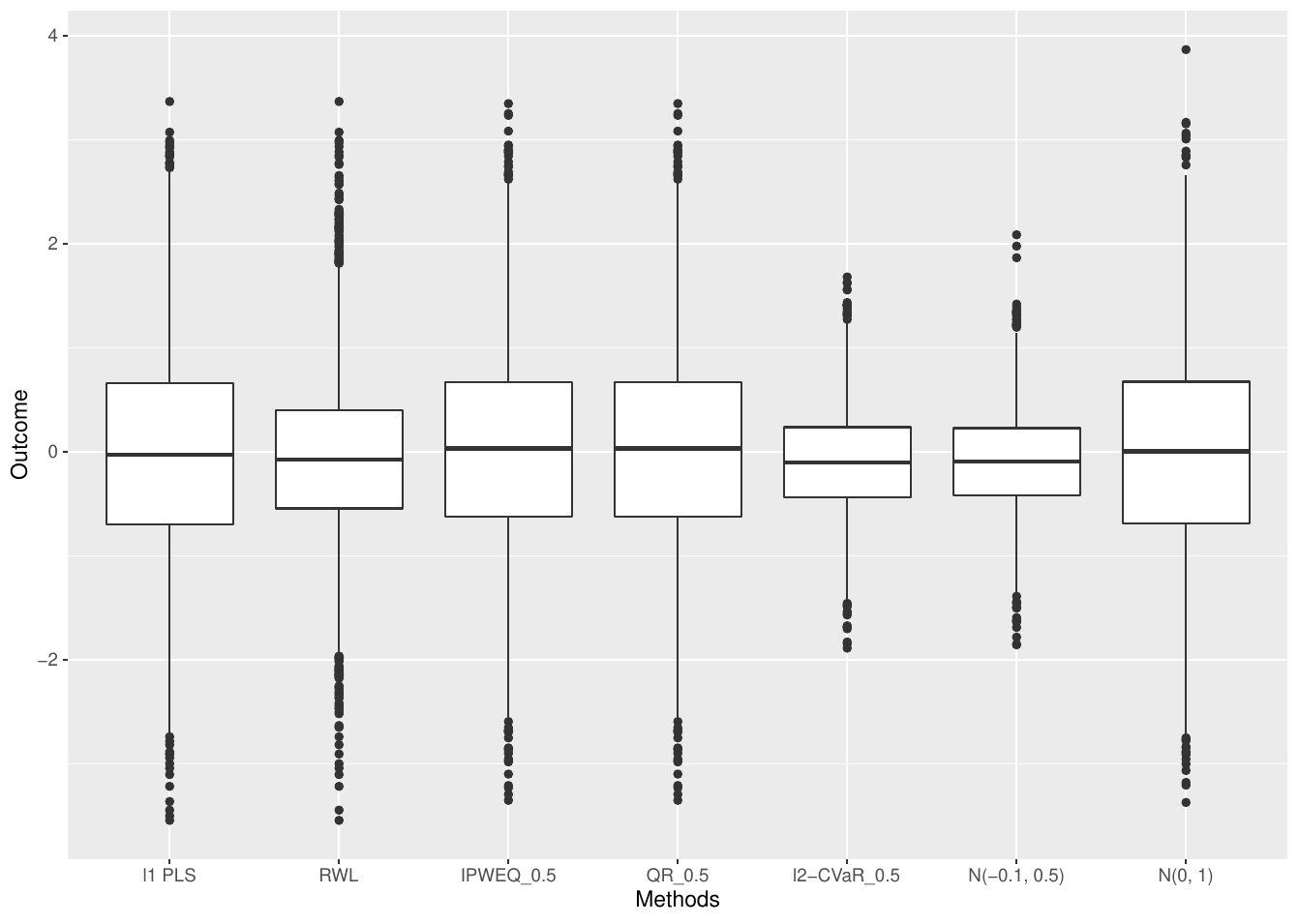}
		\caption{Box plots of value functions computed by five methods and two reference distributions. Methods starting from left to right are $l_1$-PLS, linear RWL, $\text{IPWEQ}_{0.5}$ $\text{QR}_{0.5}$, $l2$-$\text{CVaR}_{0.5}$ respectively. The last two box plots correspond to two reference normal distributions, i.e., $\calN(-0.1, 0.5)$ and $\calN(0, 1)$.}
		\label{fig:toy2}
	\end{center}
\end{figure}

\subsection{An illustrative Example}
To further illustrate how our proposed method differs from the mean and quantile optimal treatment regimes, we consider another simulating example, motivated by \cite{wang2017quantile}. Consider the reward $R = 1+ X + (3 - 5 X)A + (1 + A + 2AX)\epsilon$, where $\epsilon \sim \calN(0, 1)$ and $X \sim \text{uniform}(0, 1)$. We implement the above mentioned five methods with $\gamma = 0.1, 0.25, 0.5$ using $50000$ observations. Then we report their performance on $10^6$ testing observations under different criteria in Table \ref{fig: illustrative example}. As expected, methods perform the best under their corresponding criteria. For the $\text{QR}_\gamma$ method, while the goal is to find the best treatment with the largest conditional quantile treatment effect, it is not necessary to maximize the population quantile, i.e., $Q_\gamma(R(d))$. But their performances are also robust. In general, under the mean evaluation criterion, since our proposed methods focus on outcomes lower than certain quantiles, the performances are not as good as those mean-optimal treatment regimes, especially when $\gamma$ is small. However, when the evaluating criterion is on the low tails of outcomes, our methods are the most robust among all compared methods as $M_{0,\gamma}(d) \leq \min\{V(d), Q_\gamma(R(d))\}$. Furthermore, based on the reported coefficients of all estimated IDRs, it seems that robust IDRs given by our methods (or quantile based methods) tend to assign the treatment $A=-1$ as  the majority treatment as it  is better in improving lower tails of the outcome. Here we normalize the coefficients of IDRs so that all intercepts equal to $1$.
\begin{table}[H]
	\caption{Mean, different quantiles, and CVaR of the outcomes under 11 different treatment regimes given by different methods. The last two rows are the corresponding optimal IDRs' coefficients.
		\label{fig: illustrative example}}
	\centering
	\scalebox{0.7}{
	\begin{tabular}{crrrrrrrrrrr}
			\hline
		Criterion & $l_1$PLS &  RWL & $\text{IPWEQ}_{0.5}$ & $\text{IPWEQ}_{0.25}$& $\text{IPWEQ}_{0.1}$&
			$\text{QR}_{0.5}$&
			$\text{QR}_{0.25}$&
			$\text{QR}_{0.1}$&
			$l2$$\text{CVaR}_{0.5}$&
			 $l2$$\text{CVaR}_{0.25}$ & $l2$$\text{CVaR}_{0.1}$\\ 
			\hline
			Mean &2.802&\textbf{2.803}&2.747&2.798&2.721&2.803&2.707&2.46&2.761&2.684&2.499\\
			$Q_{0.5}$&2.81&2.801&\textbf{2.853}&2.761&2.594&2.797&2.569&2.233&2.666&2.532&2.281\\
			$Q_{0.25}$&1.32&1.325&1.191&\textbf{1.331}&1.246&1.326&1.226&0.891&1.296&1.197&0.943\\
			$Q_{0.1}$ &-0.153&-0.116&-0.566&-0.018&\textbf{0.119}&-0.106&0.119&-0.037&0.094&0.118&-0.002\\
			$\text{CVaR}_{0.5}$ &0.932&0.947&0.727&0.985&0.997&0.952&0.988&0.766&\textbf{1.011}&0.974&0.806 \\
				$\text{CVaR}_{0.25}$ &-0.231&-0.197&-0.614&-0.099&0.066&-0.187&0.071&-0.024&0.026&\textbf{0.079}&0.004 \\
					$\text{CVaR}_{0.1}$ &-1.594&-1.535&-2.179&-1.355&-0.944&-1.516&-0.912&-0.744&-1.076&-0.864&\textbf{-0.744} \\
					\hline
					Opt. IDR & $l_1$-PLS &  RWL & $\text{IPWEQ}_{0.5}$ & $\text{IPWEQ}_{0.25}$& $\text{IPWEQ}_{0.1}$ &
					$\text{QR}_{0.5}$&
					$\text{QR}_{0.25}$&
					$\text{QR}_{0.1}$&
					$\text{CVaR}_{0.5}$ &
					$\text{CVaR}_{0.25}$ & $\text{CVaR}_{0.1}$\\
					\hline
					intercept & 1 & 1 & 1 & 1 & 1 & 1 & 1 & 1 & 1 & 1 & 1 \\
					coefficient & -1.635 & -1.663 & -1.420 &  -1.759 & -2.115 & -1.672 & -2.165 & 2.959 & -1.965 & -2.241 & -2.831 \\
					\hline

		\end{tabular}
	}
\end{table} 
\subsection{Distributional Shift Examples}
Next we demonstrate the superior performance of our methods under distributional shifts of both covariates $\Xb$ and outcome $R$ based on the dual representations of $M_{0,\gamma}(d)$ in \eqref{min-max M_1}. Since \eqref{min-max M_1} is related to the expected outcome under some class of distributions, we only compare with mean-optimal IDRs such as $l_1$-PLS and RWL. For the purpose of illustration,  we only consider the sample size $n = 200$ and the dimension $p=20$. The outcome $R$ is generated by the model: $R = 1 + X_1 + X_2 + A (X_1 - X_2 + X_3) + \epsilon$. We consider the following two distributional shift scenarios:
\begin{itemize}
	\item[(1)] Each training covariate follows a two-component Gaussian mixture distribution of $\calN(0, 1)$ and asymmetric log-normal distribution $lognorm(0, 2)$ with probabilities of the two mixture components to be $0.7$ and $0.3$ respectively and $\epsilon$ follows the standard Gaussian distribution. Each test covariate and the test $\epsilon$ follow the standard Gaussian distribution.
	\item[(2)] Covariates $X$ are generated by the uniform distribution between $-1$ and $1$ and $\epsilon$ follows a two-component mixture distribution of $\calN(0, 1)$ and asymmetric log-normal distribution $lognorm(0, 2)$ with probabilities of the two mixture component to be $0.7$ and $0.3$ respectively. For the test data, covariates X follow uniform distribution from $-1$ to $1$ and the test $\epsilon$ follows the standard Gaussian distribution.
\end{itemize}
The first scenario considers the covariate distributional shift and the second scenario considers the outcome distributional shift. For simplicity, we only report misclassification error rates given by $l_1$-PLS, RWL and $l2$-$\text{CVaR}_{0.5}$ in Table \ref{fig: shift1-2}. For Scenario (1), since $l_1$-PLS assumes a linear model, its performance is not affected by the distributional shift of covariates. In contrast, RWL, which is based on maximizing the value function, depends heavily on correct approximations to the value function empirically. Thus the performance of RWL is worse than $l_1$-PLS under this scenario. For the estimated optimal IDR under our proposed $M_{0,\gamma}(d)$, the performance is superior to RWL because $M_{0,\gamma}(d)$ considers the perturbation of the covariate distributional shift. For Scenario (2), since the estimated optimal IDR under $M_{0,\gamma}(d)$ is a minimax estimator under the outcome distributional shift, the performance is much better than the other two methods developed under the value function framework.
\begin{table}[H]
	\caption{Comparisons of misclassification error rates (standard deviations) for  simulated examples with $n = 200$ and $p= 20$.
		\label{fig: shift1-2}}
	\centering
	\scalebox{0.8}{
		\begin{tabular}{rrrr}
			\hline
			& Scenario (1) &  Scenario (2)\\ 
			\hline
				$l_1$-PLS&\textbf{0.07} (0.01)&0.41 (0.007)\\
		RWL&0.14 (0.006)&0.40 (0.008)\\
			$l2$-$\text{CVaR}_{0.5}$&0.08 (0.01) &\textbf{0.19} (0.004)\\
			\hline
		\end{tabular}
	}
\end{table} 
\subsection{ Additional Simulation Scenarios}
We further study the performance of our proposed methods via three additional simulation examples, where the treatment-covariate interactions are linear. Nonlinear scenarios can be found in the supplementary material. We consider different combinations of $n$ and $p$, i.e.,  $n = 500, 1000$ and $p = 5, 10, 20$, respectively. For the ease of presentation, we only present results of $n=1000$ and $p=10$ since it is close to our real data size. Other results can be found in the Supplementary material.  We consider covariates $\Xb$ generated by the uniform distribution between $-1$ and $1$. The outcome $R$ is generated by the model: $R = m(\Xb) + A \delta(\Xb) + \epsilon$. We consider the following three different combinations of $m(\Xb)$, $\delta(\Xb)$ and $\epsilon$:
\begin{itemize}
	\item[(1)] $m(\Xb) = 1 + X_1 + X_2$, $\delta(\Xb)= X_1 - X_2 + X_3$, and $\epsilon \sim \calN(0, 2)$;
	\item[(2)] $m(\Xb) = 1 + X_1 + X_2$, $\delta(\Xb)= X_1 - X_2 + X_3$, and $\log(\epsilon) \sim \calN(0, 1)$;
	\item[(3)] $m(\Xb) = 1 + X_1 + X_2 + X_3 + X_4$, $\delta(\Xb)= \exp(2X_1)( -1 + X_1 - X_2 + X_3)$, and $\log(\epsilon) \sim \calN(0, 2)$.
\end{itemize}
In the first scenario, we consider the standard normal error distribution. This setting is oracle for $l_1$-PLS and $\text{QR}_{0.5}$. The second scenario considers the error generated by log-normal distribution, which has a heavy right tail. This setting is used to test the robustness of different IDRs and is oracle for $\text{QR}_\gamma$. The last setting is used to show the advantage of direct methods over model-based methods when the model is misspecified but the decision function is correctly specified. In addition, this scenario considers the same error distribution as the second one.  Note that in all these three scenarios, the treatment only interacts with $\delta(\Xb)$ in the mean model. Therefore the global optimal decision rule is always $\sign(\delta(\Xb))$.

For our proposed method, $\text{IPWEQ}_\gamma$ and $\text{QR}_\gamma$, three different $\gamma$'s, i.e., $\gamma = 0.1, 0.25, 0.5$. are considered. In order to evaluate different methods, we generate test data and compute mean, different quantiles and CVaRs on the test data under all estimated IDRs. More results can be found in Supplementary material. Since the global optimal IDR is known, i.e., $\sign(\delta(\Xb))$, we can compute misclassification error rates for different methods on the test data, denoted by overall-error-rate in all tables below. In addition, to further show the robustness of different IDRs, we also compute the upper-error-rate and lower-error-rate. The upper-error-rate refers to misclassification  error of observations with outcomes larger than $90\%$ quantiles and the lower-error-rate is for observations with outcomes smaller than $10\%$ quantiles. The results are summarized in Tables \ref{fig: s3n1000p10}-\ref{fig: s7n1000p10}. Overall, our methods show competitive performances among all methods. In particular, for Scenario (1), which is the standard simulation setting in the literature, our proposed methods performs well in finding optimal IDRs, despite the slightly worse performance than mean-optimal IDRs. Since we focus more on the lower-tail, this is not surprising. For Scenario (2), as the error distribution is heavily right-tailed, methods under the value function framework  ignore individuals with potentially high risks, i.e., outcome lower than the threshold, while only focusing on maximizing the value function. Therefore the resulting performances could be worse than robust methods. Among three robust methods, $\text{QR}_\gamma$ performs the best as it is oracle in this scenario. It can also be  seen that our method performs better than that of \cite{wang2017quantile} in improving outcomes of lower tails. One potential reason is due to the optimization instability in training $\text{IPWEQ}_\gamma$. In terms of misclassification errors, all robust methods have smaller lower-error-rates than mean-optimal methods, demonstrating the superior performance of these estimated IDRs on the individuals with poor outcomes. In addition, we can observe that our method is more accurate than $\text{IPWEQ}_\gamma$ in identifying optimal treatments for individuals with small outcomes. For Scenario (3), our method performs the best among all methods. This is not surprising as our method does not model the outcome directly and thus does not suffer from model misspecification like $\text{QR}_\gamma$. In addition, our method is more robust than those under the  value function framework, thus shows better empirical performance in this setting. Note that $\text{IPWEQ}_\gamma$ performs similar to ours as it is also a direct method. 
\begin{table}
	\caption{Setting 1 with $n= 1000$ and $p = 10$. Mean, different quantiles, and CVaR of the outcomes under 11 different treatment regimes given by different methods are reported. The corresponding standard errors are in the parentheses.
		\label{fig: s3n1000p10}}
	\centering
	\scalebox{0.75}{
		\begin{tabular}{cccccccccccc}
			\hline
			 Criterion&$l_1$-PLS&RWL&$\text{IPWEQ}_{0.5}$&$\text{IPWEQ}_{0.25}$&$\text{IPWEQ}_{0.10}$\\
			 mean&2.162(0.002)&2.152(0.003)&2.081(0.011)&2.108(0.009)&2.11(0.008)\\
			 $Q_{0.5}$&\textbf{2.151}(0.001)&2.141(0.002)&2.089(0.009)&2.107(0.007)&2.107(0.007)\\
			 $Q_{0.1}$&0.192(0.006)&0.175(0.005)&0.016(0.026)&0.079(0.021)&0.088(0.016)\\
			 $\text{CVaR}_{0.5}$&0.931(0.004)&0.916(0.004)&0.798(0.019)&0.843(0.015)&0.849(0.012)\\
			 $\text{CVaR}_{0.1}$&-0.482(0.008)&-0.5(0.007)&-0.718(0.038)&-0.63(0.028)&-0.611(0.02)\\
			 overall-error-rate&0.029(0.005)&0.05(0.004)&0.117(0.008)&0.097(0.008)&0.095(0.006)\\
			 upper-error-rate&\textbf{0.006}(0.001)&0.014(0.002)&0.032(0.003)&0.027(0.003)&0.029(0.003)\\
			 lower-error-rate&0.01(0.002)&0.013(0.001)&0.049(0.007)&0.034(0.005)&0.03(0.003)\\
			\hline
			\hline
			\hline
			Criterion&$\text{QR}_{0.5}$&$\text{QR}_{0.25}$&$\text{QR}_{0.1}$&$l2$-$\text{CVaR}_{0.5}$&$l2$-$\text{CVaR}_{0.25}$&$l2$-$\text{CVaR}_{0.1}$\\
			mean&\textbf{2.163}(0.001)&2.163(0.001)&2.16(0.001)&2.143(0.004)&2.141(0.004)&2.131(0.007)\\
			$Q_{0.5}$&2.151(0.001)&2.15(0.001)&2.148(0.001)&2.135(0.004)&2.132(0.004)&2.124(0.006)\\
			$Q_{0.1}$&\textbf{0.199}(0.001)&0.197(0.002)&0.191(0.002)&0.156(0.01)&0.15(0.009)&0.132(0.014)\\$\text{CVaR}_{0.5}$&0.935(0.001)&\textbf{0.933}(0.001)&0.929(0.002)&0.902(0.007)&0.897(0.007)&0.883(0.011)\\
			$\text{CVaR}_{0.1}$&\textbf{-0.47}(0.002)&-0.473(0.002)&-0.481(0.004)&-0.527(0.013)&-0.532(0.011)&-0.551(0.016)\\
			overall-error-rate&\textbf{0.026}(0.002)&0.028(0.002)&0.035(0.002)&0.061(0.006)&0.065(0.005)&0.075(0.006)\\upper-error-rate&0.007(0.001)&0.008(0.001)&0.009(0.001)&0.016(0.002)&0.017(0.002)&0.022(0.003)\\lower-error-rate&\textbf{0.007}(0.001)&0.008(0.001)&0.01(0.001)&0.018(0.002)&0.019(0.002)&0.02(0.002)\\
			\hline
		\end{tabular}
	}
\end{table}
\begin{table}
	\caption{Setting 2 with $n= 1000$ and $p = 10$. Mean, different quantiles, and CVaR of the outcomes under 11 different treatment regimes given by different methods are reported. The corresponding standard errors are in the parentheses.
		\label{fig: s4n1000p10}}
	\centering
	\scalebox{0.75}{
		\begin{tabular}{cccccccccccc}
			\hline
			 Criterion&$l_1$-PLS&RWL&$\text{IPWEQ}_{0.5}$&$\text{IPWEQ}_{0.25}$&$\text{IPWEQ}_{0.10}$\\
			 mean&8.875(0.122)&8.933(0.106)&9.316(0.029)&9.42(0.017)&9.459(0.01)\\$Q_{0.5}$&3.208(0.107)&3.268(0.092)&3.574(0.011)&3.62(0.006)&3.635(0.004)\\
			 $\text{CVaR}_{0.5}$&1.454(0.136)&1.514(0.119)&1.961(0.034)&2.094(0.02)&2.148(0.012)\\
			 $\text{CVaR}_{0.1}$&-0.337(0.187)&-0.301(0.157)&0.281(0.084)&0.559(0.049)&0.686(0.027)\\overall-error-rate&0.336(0.039)&0.318(0.034)&0.18(0.012)&0.127(0.009)&0.099(0.007)\\upper-error-rate&0.334(0.039)&0.317(0.034)&0.177(0.011)&0.125(0.009)&0.098(0.007)\\lower-error-rate&0.265(0.06)&0.247(0.052)&0.096(0.019)&0.047(0.01)&0.027(0.005)\\
			\hline
			\hline
			\hline
			Criterion&$\text{QR}_{0.5}$&$\text{QR}_{0.25}$&$\text{QR}_{0.1}$&$l2$-$\text{CVaR}_{0.5}$&$l2$-$\text{CVaR}_{0.25}$&$l2$-$\text{CVaR}_{0.1}$\\mean&9.504(0.006)&\textbf{9.512}(0.003)&9.507(0.002)&9.485(0.008)&9.502(0.006)&9.501(0.006)\\$Q_{0.5}$&3.656(0.001)&3.662(0)&\textbf{3.663}(0)&3.647(0.003)&3.652(0.002)&3.651(0.002)\\
			$Q_{0.1}$&1.405(0.001)&\textbf{1.408}(0)&1.335(0.011)&1.363(0.007)&1.364(0.007)&1.364(0.007)\\$\text{CVaR}_{0.5}$&2.21(0.003)&2.231(0)&\textbf{2.233}(0)&2.183(0.007)&2.201(0.005)&2.2(0.005)\\
			$\text{CVaR}_{0.1}$&0.818(0.008)&0.861(0.001)&\textbf{0.865}(0)&0.76(0.016)&0.797(0.011)&0.798(0.01)\\overall-error-rate&0.051(0.004)&0.014(0.001)&\textbf{0.006}(0)&0.077(0.006)&0.062(0.005)&0.063(0.005)\\upper-error-rate&0.051(0.004)&0.014(0.001)&\textbf{0.006}(0)&0.076(0.006)&0.061(0.005)&0.063(0.005)\\lower-error-rate&0.01(0.002)&0.003(0)&\textbf{0.001}(0)&0.015(0.003)&0.011(0.002)&0.011(0.002)\\
			\hline
		\end{tabular}
	}
\end{table}

\begin{table}
	\caption{Setting 3 with $n= 1000$ and $p = 10$. Mean, different quantiles, and CVaR of the outcomes under 11 different treatment regimes given by different methods are reported. The corresponding standard errors are in the parentheses.
		\label{fig: s7n1000p10}}
	\centering
	\scalebox{0.75}{
		\begin{tabular}{cccccccccccc}
			\hline
			 Criterion&$l_1$-PLS&RWL&$\text{IPWEQ}_{0.5}$&$\text{IPWEQ}_{0.25}$&$\text{IPWEQ}_{0.10}$\\mean&9.228(0.232)&9.302(0.208)&10.123(0.11)&10.351(0.034)&10.355(0.03)\\
			 $Q_{0.5}$&3.439(0.152)&3.511(0.132)&4.074(0.069)&4.227(0.023)&4.232(0.02)\\
			 $Q_{0.1}$&-0.177(0.264)&-0.144(0.254)&0.738(0.141)&1.023(0.048)&1.044(0.037)\\
			 $\text{CVaR}_{0.5}$&0.924(0.239)&0.967(0.221)&1.845(0.128)&2.12(0.046)&2.138(0.037)\\
			 $\text{CVaR}_{0.1}$&-2.555(0.564)&-2.544(0.522)&-0.519(0.312)&0.138(0.115)&0.191(0.094)\\overall-error-rate&0.348(0.041)&0.333(0.037)&0.168(0.026)&0.102(0.013)&0.098(0.011)\\upper-error-rate&0.34(0.042)&0.322(0.038)&0.155(0.024)&0.094(0.012)&0.09(0.01)\\lower-error-rate&0.28(0.055)&0.278(0.053)&0.09(0.031)&0.028(0.011)&0.023(0.009)\\
			\hline
			\hline
			\hline
			\hline
			Criterion &$\text{QR}_{0.5}$&$\text{QR}_{0.25}$&$\text{QR}_{0.1}$&$l2$-$\text{CVaR}_{0.5}$&$l2$-$\text{CVaR}_{0.25}$&$l2$-$\text{CVaR}_{0.1}$\\mean&10.01(0.032)&10.034(0.022)&10.014(0.033)&10.419(0.018)&\textbf{10.427}(0.017)&10.424(0.013)\\$Q_{0.5}$&4(0.02)&4.002(0.015)&4.004(0.021)&4.291(0.013)&\textbf{4.296}(0.011)&4.29(0.01)\\
			$Q_{0.1}$&0.602(0.034)&0.633(0.023)&0.605(0.036)&1.137(0.029)&\textbf{1.151}(0.026)&1.142(0.021)\\$\text{CVaR}_{0.5}$&1.707(0.035)&1.734(0.024)&1.711(0.036)&2.23(0.026)&2.242(0.024)&\textbf{2.235}(0.018)\\
			$\text{CVaR}_{0.1}$&-0.851(0.091)&-0.746(0.058)&-0.849(0.092)&0.379(0.067)&\textbf{0.408}(0.059)&0.4(0.041)\\overall-error-rate&0.208(0.007)&0.208(0.005)&0.207(0.007)&0.059(0.009)&\textbf{0.053}(0.008)&0.058(0.008)\\upper-error-rate&0.19(0.006)&0.191(0.005)&0.189(0.007)&0.054(0.008)&\textbf{0.049}(0.008)&0.054(0.007)\\lower-error-rate&0.12(0.009)&0.11(0.006)&0.12(0.009)&0.008(0.006)&\textbf{0.006}(0.005)&\textbf{0.006}(0.003)\\
			\hline
		\end{tabular}
	}
\end{table}

\subsection{Real Data Applications}
We perform a real data analysis to further evaluate our proposed robust criterion for estimating optimal IDRs. We use the clinical trial dataset from ``AIDS Clinical Trials Group (ACTG) 175" in \cite{hammer1996trial} to study whether there exists some subpopulations that are suitable for different combinations of treatments for AIDS. In this study, a total number of $2139$ patients with HIV infection is randomly assigned into four treatment groups: zidovudine (ZDV) monotherapy, ZDV combined with didanosine (ddI), ZDV combined with zalcitabine (ZAL), and ddI monotherapy with equal probabilities. In this data application, we focus on finding optimal IDRs between two treatments: ZDV with ddI and ZDV with ZAL as our interest. The total number of patients receiving these two treatments is 1046.\par
Similar to the previous studies by \cite{lu2013variable} and \cite{fan2016concordance}, we select $12$ baseline covariates for our model: age (year), weight(kg), CD4 T cells amount at baseline, Karnofsky score (scale at 0-100), CD8 amount at baseline, gender ($1$ = male, $0$ = female), homosexual activity ($1$ = yes, $0$ = no), race ($1$ = non white, $0$ = white),  history of intravenous drug use ($1$ = yes, $0$ = no), symptomatic status ($1$=symptomatic, $0$=asymptomatic), antiretroviral history ($1$=experienced, $0$=naive) and hemophilia ($1$=yes, $0$=no). The first five covariates are continuous and have been scaled before estimation. The remaining seven covariates are binary categorical variables. We consider the outcome as the difference between early stage (around $25$ weeks) CD4+ T (cells/mm$^3$) cell amount and  the baseline. Using this outcome, we can estimate the optimal IDR under our proposed robust criterion. To evaluate the performance of our proposed methods under the robust criterion, we randomly divide the dataset into five folds and use four of them to estimate optimal IDRs by different methods. The remaining one fold of data is used to evaluate the performances of different methods. We repeat this procedure $200$ times.  For each method, we report same quantities as before except misclassification errors as we do not know the truth. 
\begin{table}
	\caption{Results of value function comparison (standard errors) for the AIDS data. First column represents the means of empirical value functions. Second and third columns represent means of $50\%$ and $25\%$ quantiles of empirical value functions, respectively.
		\label{tab:data application}}
	\centering
	\scalebox{0.75}{
		\begin{tabular}{cccccccccccc}
			\hline
			 Criterion&$l_1$-PLS&RWL&$\text{IPWEQ}_{0.5}$&$\text{IPWEQ}_{0.25}$&$\text{IPWEQ}_{0.10}$\\mean&50.737(1.165)&48.679(1.2)&51.259(1.156)&49.47(1.215)&44.621(1.305)\\$Q_{0.5}$&40.57(1.267)&37.92(1.207)&42.085(1.241)&40.33(1.302)&35.565(1.305)\\$Q_{0.25}$&-27.665(1.21)&-28.387(1.196)&-26.222(1.104)&-24.613(1.257)&-32.305(1.312)\\$Q_{0.1}$&-99.447(2.226)&-97.038(2.318)&-100.372(2.222)&-95.981(2.115)&-107.521(2.364)\\$\text{CVaR}_{0.5}$&-50.56(1.194)&-50.598(1.317)&-50.403(1.193)&-48.507(1.323)&-54.844(1.398)\\$\text{CVaR}_{0.25}$&-108.622(1.894)&-107.12(2.012)&-109.023(1.903)&-104.985(2.031)&-112.784(1.996)\\$\text{CVaR}_{0.1}$&-186.459(3.329)&-183.688(3.446)&-189.305(3.34)&-180.473(3.443)&-188.242(3.279)\\
			\hline
			\hline
			\hline
			Criterion&$\text{QR}_{0.5}$&$\text{QR}_{0.25}$&$\text{QR}_{0.1}$&$l2$-$\text{CVaR}_{0.5}$&$l2$-$\text{CVaR}_{0.25}$&$l2$-$\text{CVaR}_{0.1}$\\mean&47.669(1.113)&46.268(1.169)&42.264(1.06)&53.345(1.341)&\textbf{55.042}(1.235)&48.711(1.818)\\$Q_{0.5}$&37.37(1.193)&35.31(1.252)&30.825(1.1)&44.855(1.433)&\textbf{46.85}(1.441)&41.07(1.88)\\$Q_{0.25}$&-29.47(1.216)&-31.63(1.088)&-32.682(1.126)&-25.552(1.295)&\textbf{-23.8}(1.261)&-28.613(1.531)\\$Q_{0.1}$&-100.229(2.198)&-101.004(2.196)&-98.603(1.998)&-100.084(2.228)&\textbf{-92.861}(2.335)&-96.221(2.282)\\$\text{CVaR}_{0.5}$&-52.678(1.214)&-53.882(1.203)&-51.421(1.129)&-47.621(1.403)&\textbf{-47.055}(1.32)&-50.579(1.448)\\$\text{CVaR}_{0.25}$&-109.928(1.903)&-110.554(1.894)&\textbf{-103.035}(1.658)&-106.173(2.08)&-105.262(2.023)&-107.446(1.899)\\$\text{CVaR}_{0.1}$&-189.68(3.263)&-189.613(3.58)&\textbf{-169.252}(2.902)&-181.626(3.556)&-186.562(3.611)&-185.402(3.379)\\
			\hline
		\end{tabular}
	}
\end{table}

From Table \ref{tab:data application}, we can see that our proposed methods are competitive. Due to the possible right tail distribution of $R$ as shown in Figure \ref{fig:real}, we observe the similar pattern as Scenario (2)  where the outcome distribution has potential right tails. Robust methods seem to outperform mean-based methods in this application as mean-based methods tend to ignore individuals with small outcomes due to the individuals with particularly large outcomes. Among all robust methods we compare, it seems that $l2$-$\text{CVaR}_{0.25}$ overall performs the best. It is somewhat surprising to see that while our methods in general have better performances in terms of quantiles and mean than other robust methods, $\text{QR}_\gamma$ methods show better performance than ours in terms of the CVaR criteria. This is possibly due to the difference between model-based and direct methods as observed in our simulation scenario (2). After comparison, we then apply our method on the whole dataset using $\gamma = 0.25$ and find that the estimated optimal rule tends to assign all individuals to the combined treatment: ZDV with ZAL, which indicates ZDV with ZAL has uniformly better treatment effect than ZDV with ddI under this criterion. This implies that if one considers personalized treatments for individuals at risk (below $25\%$ in this case), it may be more desirable to always assign the second treatment. In contrast, there seems to have a subgroup that can benefit from the treatment 1 according to the results from $l_1$-PLS and RWL. We also examine the estimated IDR using our method with $\gamma =0.5$. The most important variables that contribute to our estimated IDR are age, Karnofsky score, gender, race, antiretroviral history, and CD4 T cells at the baseline. In particular, our estimated IDR recommends ZDV with ZAL to old patients, but ZDV with ddI to young patients. This is consistent with the finding in the existing literature such as \cite{fan2016concordance}.   To further investigate whether the subgroup is valid or not under the proposed criterion, it will be interesting to develop and conduct a similar hypothesis testing under the CVaR criteria to that in \cite{shi2019sparse}. 


\begin{figure}
	\begin{center}
		\includegraphics[scale = 0.6]{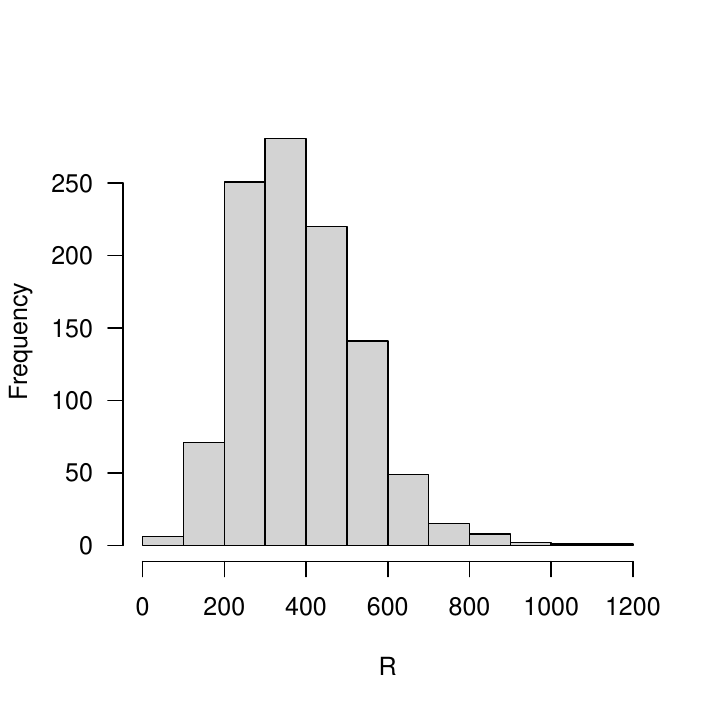}
		\caption{Histogram of the difference between early stage CD4+T cell and the baseline. It exhibits a  right tail behavior.}
		\label{fig:real}
	\end{center}
\end{figure}

\section{Discussion and Extensions}
In this paper, we discuss the robustness issue for ITR problems and propose new criteria to control tails of the individuals' outcomes. Our framework is broad and can cover a wide range of criteria. In this section, we further discuss two extensions of our proposed criteria in terms of  modelings. More discussions and results can be found in  the Supplementary Material.

\subsection{Control of Two-sided Truncated Outcome}
In this subsection, we extend the $\gamma$-CVaR criterion to  the truncated mean over a certain range of quantiles. Consider two quantiles $\gamma_1$ and $\gamma_2$ with $\gamma_1 > \gamma_2$, we evaluate each IDR $d$ by considering its truncated mean between $\gamma_2$ and $\gamma_1$ quantiles, i.e.,
\begin{align}\label{truncated mean}
 \Eb^d[R\mathbb{I}(R \leq Q_{\gamma_1}(R(d)))] - \Eb^d[R\mathbb{I}(R \leq Q_{\gamma_2}(R(d)))].
\end{align}
By the property of truncated mean, similar as $\gamma$-CVaR, we can use the following measure to evaluate each IDR $d$:
\begin{align}\label{truncated mean representation}
M_{\gamma_1, \gamma_2}(d) = \sup_{ \alpha_1 \in \mathbb{R}} \left\{\gamma_1\alpha_1 - \Eb[\frac{\mathbb{I}(A = d(\Xb))}{\pi(A | \Xb)}(\alpha_1 - R)_+] \right\} - \sup_{ \alpha_2 \in \mathbb{R}} \left\{\gamma_2\alpha_2 - \Eb[\frac{\mathbb{I}(A = d(\Xb))}{\pi(A | \Xb)}(\alpha_2 - R)_+] \right\},
\end{align}
which is the same as \eqref{truncated mean} when the outcome has an continuous distribution. Then the optimal IDR under this criterion is defined as
\begin{align}\label{optimal IDR under truancated mean}
	d_3 \in \argmax_{d \in \calD_0} M_{\gamma_1, \gamma_2}(d).
\end{align}
In order to estimate $d_3$, we can also replace the $0$-$1$ loss by the smooth surrogate loss function $S(u)$ and solve the following optimization problem.
\begin{equation}\label{empirical truncated}
\begin{aligned}
	\min_{f \in \calH}& \left\{\left(\min_{ \alpha_1 \in \mathbb{R}} \frac{1}{n}\sum_{i = 1}^{n} \frac{S(A_i f(\Xb_i))}{\pi(A_i | \Xb_i)}(-\gamma_1\alpha_1 + (\alpha_1 - R_i)_+)\right) \right.\\
	&\left. - \left(\min_{ \alpha_2 \in \mathbb{R}} \frac{1}{n}\sum_{i = 1}^{n} \frac{S(A_i f(\Xb_i))}{\pi(A_i | \Xb_i)}(-\gamma_2\alpha_2 + (\alpha_2 - R_i)_+)\right)\right.\\
	&\left.+ \frac{\lambda_n}{2}||f||^2_{\calH}\right\}.
\end{aligned}
\end{equation}
One can similarly use Proposition \ref{equi prop} to decompose the objective function in \eqref{empirical truncated} into a DC function and develop a corresponding DC algorithm.  In addition, we note that $d_3$ indeed can approximate the optimal quantile IDR $\widetilde d_\gamma$ when $\gamma_1$ and $\gamma_2$ are chosen to be close to $\gamma$. 
\subsection{Flexible Models}
In previous applications of our proposed method, we only use the CVaR criterion to estimate a robust IDR. In practice, one can combine it with other criteria. For example, in order to have tail controls and more mean-outcome improvement, we can consider to use $M_{1, \gamma}(d)$ defined below by combining our proposed $M_{0,\gamma}(d)$ and $V(d)$ together:
\begin{equation}\label{mix fun compute}
\begin{aligned}
M_{1, \gamma}(d) &:= 0.5V(d) + 0.5 M_{0,\gamma}(d).
\end{aligned}
\end{equation} 
All our results in Sections 3 and 4 can be naturally extended. This is indeed roughly equivalent to maximizing $V(d)$ among all the decision rules with $M_{0,\gamma}(d)$ larger than some threshold. Such an interpretation motivates us to consider another approach to find the optimal IDR when observing multiple outcomes after receiving treatments. Without loss of generality, suppose we can observe a risk outcome $H$ in addition to $R$. In general, we prefer a smaller risk outcome. By some modification on $M_{0,\gamma}(d)$, we can search the optimal IDRs by 
\begin{equation}\label{contraint CVaR}
	\begin{aligned}
	\max_{d \in \calD_0} \epc & V(d)\\
	\text{subject to} \epc & \min_{ \alpha \in \mathbb{R}} \left\{\alpha + \frac{1}{\gamma} \Eb[(H(d) - \alpha)_+]  \right\} \leq t, 
	\end{aligned}
\end{equation}
for some pre-specified constant $t$, so that we can control the risk outcome with the CVaR criterion. By a similar analysis as in Section 2, the resulting IDRs using \eqref{contraint CVaR} are the best IDRs among all the decision rules with the risk outcome of $(1-\gamma) \times 100\%$ of the population being less than some threshold $t$. Given the data, we could use the same techniques in Section 3 to estimate the IDR. Specifically, by using a surrogate function $S(u)$, we can solve the following optimization problem:
\begin{equation}\label{empirical contraint CVaR}
\begin{aligned}
\max_{d \in \calD} \epc & \frac{1}{n}\sum_{i  = 1}^n \frac{R_iS(A_i f(\Xb_i))}{\pi(A_i | \Xb_i)} - \frac{\lambda}{2} ||f||^2_\calH\\
\text{subject to} \epc & \min_{1 \leq j \leq n} \left\{ \frac{1}{n}\sum_{i  = 1}^n \frac{(H_j + \frac{1}{\gamma}(H_i - H_j)_+)S(A_i f(\Xb_i))}{\pi(A_i | \Xb_i)}  \right\} \leq t, 
\end{aligned}
\end{equation}
which  can be formulated as optimizing a DC function with a DC constraint as we can see that  the minimum of a finite number of DC functions in the constraint is still a DC function. However, several issues need to be solved before proceeding. The existence of the optimal solution for the optimization problem \eqref{empirical contraint CVaR} should be demonstrated. The use of a surrogate function to replace the indicator function in \eqref{contraint CVaR} needs to be further justified. Another challenge is to establish a regret bound for the optimal IDR obtained from Problem \eqref{empirical contraint CVaR}. Existing techniques in statistical learning theory may not be directly used since there is a stochastic term in the constraint of \eqref{contraint CVaR} and \eqref{empirical contraint CVaR}. Nevertheless, this can be an interesting direction to pursue in the future.

\begin{spacing}{0.9}
\bibliography{reference_v3}
\end{spacing}
\end{document}